\newcommand{\vx}{{\vec{x}}}
\newcommand{\vy}{{\vec{y}}}
\newcommand{\vz}{{\vec{z}}}
\newcommand{\vw}{\vec{w}}
\newcommand{\hx}{{\hat{x}}}
\newcommand{\hy}{{\hat{y}}}
\newcommand{\hz}{{\hat{z}}}
\newcommand{\hw}{\hat{w}}
\newtheorem{definition}{Definition}
\newtheorem{lemma}{Lemma}
\newtheorem{theorem}{Theorem}
\newtheorem{corollary}{Corollary}
\newtheorem{example}{Example}
\newtheorem{problem}{Problem}
\begin{document}
%
% paper title
% Titles are generally capitalized except for words such as a, an, and, as,
% at, but, by, for, in, nor, of, on, or, the, to and up, which are usually
% not capitalized unless they are the first or last word of the title.
% Linebreaks \\ can be used within to get better formatting as desired.
% Do not put math or special symbols in the title.
\title{Functional Synthesis via Input-Output Separation
\thanks{Work supported in part by NSF grants CCF-1319459 and IIS-1527668, by NSF Expeditions in Computing project "ExCAPE: Expeditions in Computer Augmented Program Engineering", by a grant from MHRD, Govt of India, under the IMPRINT-1 scheme, and by the Brazilian agency CNPq through the Ci\^{e}ncia Sem Fronteiras program.}}

% author names and affiliations
% use a multiple column layout for up to three different
% affiliations

\author{\IEEEauthorblockN{Supratik Chakraborty}
\IEEEauthorblockA{%\textit{dept. name of organization (of Aff.)} \\
\textit{IIT Bombay}\\
Mumbai, India \\
supratik@cse.iitb.ac.in}
\and
\IEEEauthorblockN{Dror Fried}
\IEEEauthorblockA{%\textit{dept. name of organization (of Aff.)} \\
\textit{Rice University}\\
Houston, USA \\
dror.fried@rice.edu}
\and
\IEEEauthorblockN{Lucas M. Tabajara}
\IEEEauthorblockA{%\textit{dept. name of organization (of Aff.)} \\
\textit{Rice University}\\
Houston, USA \\
lucasmt@rice.edu}
\and
\IEEEauthorblockN{Moshe Y. Vardi}
\IEEEauthorblockA{%\textit{dept. name of organization (of Aff.)} \\
\textit{Rice University}\\
Houston, USA \\
vardi@cs.rice.edu}
}

% conference papers do not typically use \thanks and this command
% is locked out in conference mode. If really needed, such as for
% the acknowledgment of grants, issue a \IEEEoverridecommandlockouts
% after \documentclass

% for over three affiliations, or if they all won't fit within the width
% of the page, use this alternative format:
% 
%\author{\IEEEauthorblockN{Michael Shell\IEEEauthorrefmark{1},
%Homer Simpson\IEEEauthorrefmark{2},
%James Kirk\IEEEauthorrefmark{3}, 
%Montgomery Scott\IEEEauthorrefmark{3} and
%Eldon Tyrell\IEEEauthorrefmark{4}}
%\IEEEauthorblockA{\IEEEauthorrefmark{1}School of Electrical and Computer Engineering\\
%Georgia Institute of Technology,
%Atlanta, Georgia 30332--0250\\ Email: see http://www.michaelshell.org/contact.html}
%\IEEEauthorblockA{\IEEEauthorrefmark{2}Twentieth Century Fox, Springfield, USA\\
%Email: homer@thesimpsons.com}
%\IEEEauthorblockA{\IEEEauthorrefmark{3}Starfleet Academy, San Francisco, California 96678-2391\\
%Telephone: (800) 555--1212, Fax: (888) 555--1212}
%\IEEEauthorblockA{\IEEEauthorrefmark{4}Tyrell Inc., 123 Replicant Street, Los Angeles, California 90210--4321}}

% use for special paper notices
%\IEEEspecialpapernotice{(Invited Paper)}

% make the title area
\maketitle

% As a general rule, do not put math, special symbols or citations
% in the abstract
\begin{abstract}
Boolean functional synthesis is the process of constructing a Boolean function from a Boolean specification that relates input and output variables. Despite significant recent developments in synthesis algorithms, Boolean functional synthesis remains a challenging problem even when state-of-the-art methods are used for decomposing the specification.
In this work we bring a fresh decomposition approach, orthogonal to existing methods, that explores the decomposition of the specification into separate input and output components. 
%Initially we try to adapt the more theoretical notion of ``sequential decomposition'' to a method that allows each component to be independently synthesized and later recomposed to form an implementation for the entire specification. We point the drawbacks for practical use of this approach and propose a more relaxed approach, in which
We make  use of an input-output decomposition of  a given specification described as a  CNF formula,  by alternatingly analyzing the separate input and output components. We exploit well-defined properties of these components to ultimately synthesize a solution for the entire specification.
We first provide a theoretical result that, for input components with specific structures, synthesis for CNF formulas via this framework can be performed more efficiently  than in the general case.
We then show by experimental evaluations that our algorithm performs well also in practice on instances which are challenging for existing state-of-the-art tools, serving as a good complement to modern synthesis techniques.
\end{abstract}

% no keywords

% For peer review papers, you can put extra information on the cover
% page as needed:
% \ifCLASSOPTIONpeerreview
% \begin{center} \bfseries EDICS Category: 3-BBND \end{center}
% \fi
%
% For peerreview papers, this IEEEtran command inserts a page break and
% creates the second title. It will be ignored for other modes.
\IEEEpeerreviewmaketitle

\section{Introduction}\label{sec:intro}
%MYV1: Introduction should match abstract.
	
Boolean functional synthesis is the problem of constructing a Boolean function from a Boolean specification that describes a relation between input and output variables~\cite{JSCTA15,FTV16,ACJS17,TV17}. This problem has been explored in a number of settings including circuit design~\cite{KS00}, QBF solving~\cite{RS2016}, and reactive synthesis~\cite{ZTLPV17}, and several tools have been developed for its solution. Nevertheless, scalability of Boolean functional synthesis methods remains a concern as the number of variables and size of the formula grows. This is not surprising since Boolean functional synthesis is in fact \textsc{co-NP\textsuperscript{NP}}-hard.
%MYV1: KS90 is undefined.
	
A standard practice for handling the problem of scalability is based on decomposing the given formula into smaller sub-specifications and synthesizing each component separately~\cite{JSCTA15,ACJS17,TV17}. The most common form of such decomposition, called \emph{factorization}, is when the formula is represented as a conjunction of constraints, in which  each conjunct can be seen as a sub-specification~\cite{JSCTA15,TV17}. The main challenge in this approach is that most factors cannot be synthesized entirely separately due to the dependencies created by shared input and output variables. The ways to meet this challenge are usually to either merge factors that share variables~\cite{TV17} or perform additional computations in order to combine the functions synthesized for different factors~\cite{JSCTA15}. All these result in additional work that must be performed during the synthesis. %All these end up in...[[DF: finish sentence]].

In this work, we propose an alternative decomposition framework,  which follows naturally from the fact that variables in the specification are separated into input and output variables. This idea was originally inspired by~\cite{FLOV18}, which explores the notion of \emph{sequential relational decomposition}, in which a relation is decomposed into two by introducing an intermediate domain. Differently from factorization, this form of decomposition allows the two components to be synthesized completely independently. That work, however, shows that decomposition is hard in general, and if the relation is given as a Boolean circuit, decomposition is NEXPTIME-complete. Furthermore, there is no guarantee that synthesizing the two components independently would be easier than synthesizing the original specification, since the synthesis of one component might ignore useful information given by the other component.

We instead suggest a more relaxed notion of decomposition for specifications described as CNF formulas, in  which every clause is split into an input and an output clause and the independent analyses of the input/output components ``cooperate'' to synthesize a function for the entire specification. Based on this concept, we describe a novel synthesis algorithm for CNF formulas called the ``Back-and-Forth'' algorithm, where rather than synthesizing the input and output components entirely independently we share information back and forth between the two components to guide the synthesis. More specifically,  our algorithm alternates between SAT calls that follow the input-component structure analysis and MaxSAT calls that follow the output-component structure analysis. Thus, this approach builds on recent progress with SAT and MaxSAT solving \cite{SLM2009,LiMan09}. A notable consequence of our method  is that, as the number of SAT calls is dependent on the structure of the input component, for specifications with some well-defined input structure we can perform synthesis in \textsc{P\textsuperscript{NP}}, compared to the generally mentioned \textsc{co-NP\textsuperscript{NP}}-hardness. An additional advantage of our algorithm is that it constructs the synthesized function as a \emph{decision list}~\cite{Rivest87}. Compared to other data structures for representing Boolean functions, such as ROBDDs or AIGs, decision lists have significant benefits in term of explainability, allowing domain specialists to validate and analyze their behavior (see discussion in Section \ref{sec:discussion} for more details).
  
We experimentally evaluate the ``Back-and-Forth'' algorithm on a suite of standard synthesis benchmarks, comparing its performance with that of state-of-the-art synthesis tools. Although these tools perform very well on many families of benchmarks, our results show that the ``Back-and-Forth'' algorithm is able to handle classes of benchmarks that these tools are unable to synthesize, indicating that it belongs in a portfolio of synthesis algorithms.
  
%[[DF: Do we need this paragraph?]] We conclude by discussing generalizations of our algorithm to more general rule-based systems of constraints, and identifying likely domains, such as reactive and scenario-based programming, in which such specifications may appear. We also discuss possible avenues of future work, including the search for alternative decomposition schemes.

 %We demonstrate  the complete separation scheme for formulas in conjunctive normal form (CNF). This decomposition splits every clause of the CNF formula $F$ into an \emph{input clause} and an \emph{output clause}, which can then, by adding the intermediate variables, be respectively used to define $F_1$ and $F_2$.  We then discuss the advantages and the disadvantages of our method and by experimental evaluation we show that 
%synthesis via complete separation might not be ideal:
%MYV1: Next line is not clear to me.
%this desired specification is still a challenge:
%although the information relayed between components by the $\vz$ variables is enough to produce a correct result, it might discard useful information that would improve performance of the synthesis procedure. Since in complete separation synthesis is performed independently for the two components, there is no way to communicate this additional information during the final overall synthesis, leading to poor performance.
%MYV1: Not clear if this means that the approach is incorrect, or is is not scalable.
  
\section{Related Work}\label{sec:related}
	
%	We bring here related work both from our sequential decomposition paper and both from our BDD work and specifically the factored representation which we basically compare against.
    
Constructing explicit representations of implicitly specified functions is a fundamental problem of interest to both theoreticians and practitioners. In the contexts of Boolean functional synthesis and certified QBF solving, such functions are also called \emph{Skolem functions}~\cite{BWJ2014,JSCTA15,HSB2014}. Boole~\cite{Boole1847} and Lowenheim~\cite{Lowenheim1910} studied variants of this problem when computing most general unifiers in resolution-based proofs.  Unfortunately, their algorithms, though elegant in theory, do not scale well in practice~\cite{MOP1998}. The close relation between Skolem functions and proof objects in specialized QBF proof systems has been explored in~\cite{BWJ2014,HSB2014}. One of the earliest applications of Boolean functional synthesis has been logic synthesis - see~\cite{Tabajara18} for a survey. More recently, Boolean functional synthesis has found applications in diverse areas such as temporal strategy synthesis~\cite{AMN2005,SYNTCOMP15,ZTLPV17}, certified QBF solving ~\cite{JB2011,RT2015,BJ2012,NPLSB2012}, automated program synthesis~\cite{SGF2013,SoLe2013}, circuit repair and debugging~\cite{JMF2012}, and the like.  This has resulted in a new generation of Boolean functional synthesis tools, cf. ~\cite{HSB2014,JSCTA15,ACJS17,ACGKS2018,FTV16,TV17,RT2015,RS2016}, that are able to synthesize functions from significantly larger relational specifications than what was possible a decade back.

Recent tools for Boolean functional synthesis can be broadly categorized based on the techniques employed by them. Given a specification $F(\vx, \vy)$, where $\vx$ denotes inputs and $\vy$ denotes outputs, the work of~\cite{HSB2014} extracts Skolem functions for $\vy$ in terms of $\vx$ from a proof of validity of $\forall \vx.\exists\vy.F(\vx,\vy)$ expressed in a specific format. The efficiency of this technique crucially depends on the existence and size of a proof in the required format.  \emph{Incremental  determinization}~\cite{RS2016} is a highly effective synthesis technique that accepts as input a CNF representation of a specification and
%It then proceeds by choosing an output variable $y$, analyzing a subset of clauses containing $y$ and "determinizing" the dependence of $y$ on input variables $\vx$ and already "determinized" output variables, possibly introducing assumptions in the process. Incremental determinization
builds on several successful heuristics used in modern conflict-driven clause-learning (CDCL) SAT solvers~\cite{SLM2009}.

In~\cite{FTV16}, the composition-based synthesis approach of~\cite{J2009} is adapted and new heuristics are proposed for synthesizing Skolem functions from an ROBDD representation of the specification.  The technique has been further improved in~\cite{TV17} to work with factored specifications represented as implicitly conjoined ROBDDs. CEGAR-based techniques that use modern SAT solvers as black boxes~\cite{JSCTA15,ACJS17,ACGKS2018} have recently been shown to scale well on several classes of large benchmarks.  The idea behind these techniques is to start with an efficiently computable initial estimate of Skolem functions, and use a SAT solver to test if the estimates are correct. A satisfying assignment returned by the solver provides a counterexample to the correctness of the function estimates, and can be used to iteratively refine the estimates.  In~\cite{ACGKS2018}, it is shown that transforming the representation of the specification to a special negation normal form allows one to efficiently synthesize Skolem functions.

Both ROBDD and CEGAR-based approaches make use of decomposition techniques to improve performance, the most common of which is \emph{factorization}~\cite{JSCTA15,TV17}. In this method, every conjunct of a conjunctive specification is considered individually. The main drawback in this approach is that the dependencies between conjuncts limit how much each of them can be analyzed independently of the others, requiring either partially combining components, as in~\cite{TV17}, or going through a process of refinement of the results~\cite{JSCTA15}. This issue motivates the search for alternative notions of decomposition for synthesis problems.
Our approach is loosely inspired by the idea of \emph{sequential relational decomposition} explored in depth in~\cite{FLOV18}. A more direct application of this idea to synthesis might still be possible, but requires further exploration.
%One option that we explore in this work is the idea of \emph{sequential relational decomposition} explored in depth in~\cite{FLOV18}. That work shows that decomposition is hard in general, and if the specification is given as a Boolean circuit, decomposition is NEXPTIME-complete. Nevertheless, there might still be ways to apply such ideas to synthesis.
%The idea of \emph{sequential relational decomposition} for simplifying the synthesis problem is explored in depth in~\cite{FLOV18}.  Specifically, it is shown that decomposition is hard in general, and if the specification is given as a Boolean circuit, decomposition is NEXPTIME-complete.  Nevertheless, as we show in this paper, carefully orchestrated decomposition can solve some hard synthesis problem instances that other state-of-the-art solvers find difficult to solve. 
In addition to the above techniques, templates or sketches have been used to synthesize functions when information about the possible functional forms is available a priori~\cite{SGF2013,SRBE2005}. 

As is clear from above, several orthogonal techniques have been found to be useful for the Boolean functional synthesis problem. In fact, there remain difficult corners, where the specification is stated simply, and yet finding Skolem functions that satisfy the specification has turned out to be hard for all state-of-the-art tools.  Our goal in this paper is to present a new technique and algorithm for this problem, that does not necessarily outperform existing techniques on all benchmarks, but certainly outperforms them on instances in some of these difficult corners.  We envisage our technique being added to the existing repertoire of techniques in a portfolio Skolem-function synthesizer, to expand the range of problems that can be solved.

\section{Preliminaries}\label{sec:Prem}

\subsection{Boolean Functional Synthesis}

A specification for the Boolean functional synthesis problem is a (quantifier-free) Boolean formula $F(\vx, \vy)$ over \emph{input variables} $\vx = (x_1, \ldots, x_m)$ and \emph{output variables} $\vy = (y_1, \ldots, y_n)$. Note that $F$ can be interpreted as a relation $F \subseteq X \times Y$, where $X$ is the set of all assignments $\hx$ to $\vx$ and $Y$ is the set of all assignments $\hy$ to $\vy$. With that in mind, we denote by $Dom(F) = \{\hx \mid \exists \hy . (F(\hx, \hy) = 1)\}$ and $Img(F) = \{\hy \mid \exists \hx . (F(\hx, \hy) = 1)\}$ the domain and image of the relation represented by $F$. We also use $Img_{\hx}(F) = \{\hy \mid F(\hx, \hy) = 1\}$ to denote the image of a specific element $\hx \in X$. If $Dom(F) = X$, then we say that $F$ is \emph{realizable}.

Two Boolean formulas $F(\vw)$ and $F'(\vw)$ are said to be \emph{logically equivalent}, denoted by $F \equiv F'$, if they have the same solution space; that is, for every assignment $\hw$ to $\vw$, $F(\hw) = 1$ iff $F'(\hw) = 1$. Unless stated otherwise, all Boolean formulas mentioned in this work are quantifier free.

We say that a partial function $g : X \to Y$ \emph{implements} a relation $F \subseteq X \times Y$ if for every $\hx \in Dom(F)$ we have that $(\hx, g(\hx)) \in F$. Such a $g$ is also called a \emph{Skolem function} of $F$. Note that if $F$ is realizable, then $g$ is a total function.
%\textcolor{red}{\sout{Given two partial functions $g_1 : X \to Z$ and $g_2 : Z \to Y$ such that $Img(g_1) \subseteq Dom(g_2)$, we denote the composition of $g_1$ and $g_2$  by $(g_2 \circ g_1) : X \to Y$ and define it by $(g_2 \circ g_1)(\hx) = g_2(g_1(\hx))$.}}
Finally, we define the \emph{Boolean-synthesis problem} as follows:

\begin{problem}
Given a specification $F(\vx, \vy)$, construct a partial function $g$ that implements $F$.
\end{problem}

For more information on Boolean synthesis, see~\cite{JSCTA15,FTV16}.

\subsection{Decision lists}

   Our choice of representation of Skolem functions in this work is inspired by the idea that we can represent an arbitrary Boolean function $f$ by a \emph{decision list}~\cite{Rivest87}. A decision list is an expression of the form \texttt{if $f_1(\vx)$ then $\hy_1$ else if $f_2(\vx)$ then $\hy_2$ else $\ldots$ else $\hy_k$}, where each $f_i$ is a formula in terms of the input variables $\vx$ and each $\hy_i$ is an assignment to the output variables $\vy$. The length $k$ of the list corresponds to the number of decisions. Clearly, for a specification $F(\vx, \vy)$ with $m$ input variables we can always synthesize as an implementation a decision list of length $2^m$, where for every possible assignment of $\vx$ we choose an assignment of $\vy$ that satisfies the specification. Many specifications, however,  can be implemented by significantly smaller decision lists, by taking advantage of the fact that multiple inputs can be mapped to the same output. Our analysis identifies and exploits these cases.
   
   Despite being a natural representation, decision lists might not be appropriate for a physical implementation of the synthesized function as a circuit. In this case, it might make sense to collect the decisions into a more compact representation, such as an ROBDD.

\subsection{Conjunctive Normal Form}

A Boolean formula $F(\vw)$ is in \emph{conjunctive normal form} (CNF) if $F$ is a conjunction of clauses $C_1 \land \ldots \land C_k$, where every clause $C_i$ is a disjunction of literals (a variable or its negation). A subset $S$ of the clauses of a CNF formula $F$ is \emph{satisfiable} if there exists an assignment $\hw$ to the variables $\vw$ in $F$ such that $C_i(\hw) = 1$ for every clause $C_i \in S$. Similarly, a subset $S$ of the clauses of $F$ is \emph{all-falsifiable} if there exists an assignment $\hw$ such that  $C_i(\hw) = 0$ for every clause $C_i \in S$. A subset $S$ of clauses is a \emph{maximal satisfiable subset} (MSS) if $S$ is satisfiable and every superset $S' \supset S$ is unsatisfiable. Similarly, $S$ is a \emph{maximal falsifiable subset} (MFS) if $S$ is all-falsifiable and every superset $S' \supset S$ is not all-falsifiable. For more information on MSS and MFS, refer to~\cite{IMPM13}.
\newcommand{\Co}{{Co}}

\section{Synthesis via Input-Output Separation}\label{sec:amicable}
	
    In this section, we present a novel algorithm for Boolean functional synthesis from CNF specifications. Our approach is based on a separation of every clause into an input part and an output part. First, we describe how a decision list implementing the specification can be constructed by enumerating MFSs of the input clauses, or similarly by enumerating MSSs of the output clauses. Then, we show how we can benefit from alternating between the two: the MFSs can be used to avoid useless MSSs, while the MSSs can be used to cover multiple MFSs at the same time without enumerating all of them.

  % Our analysis is inspired by the idea that we can represent a function by a \emph{decision list}~\cite{Rivest87}. A decision list is an expression of the form \texttt{if $f_1(\vx)$ then $\hy_1$ else if $f_2(\vx)$ then $\hy_2$ else $\ldots$ else $\hy_k$}, where each $f_i$ is a formula in terms of the input variables $\vx$ and each $\hy_i$ is an assignment to the output variables $\vy$. The length $k$ of the list corresponds to the number of decisions. Clearly, for a specification $F(\vx, \vy)$ with $m$ input variables we can always synthesize as an implementation a decision list of length $2^m$, where for every possible assignment of $\vx$ we choose an assignment of $\vy$ that satisfies the specification. Many specifications, however,  can be implemented by significantly smaller decision lists, by taking advantage of the fact that multiple inputs can be mapped to the same output. Our analysis identifies and exploit these cases.
    
    Given a CNF formula $F(\vec{x}, \vec{y})$, assume $F(\vec{x}, \vec{y}) = \bigwedge^k_{i = 1} C_i$, where $C_1, \ldots, C_k$ are clauses over $\vec{x}$ and $\vec{y}$.
    %\textcolor{red}{(additional variables introduced for example by Tseytin transformation can be interpreted as output variables)}
    Let $C_i|_{\vec{x}}$ denote the $x$-part of clause $C_i$, that is, the disjunction of all $x$ literals in $C_i$.  Similarly, let $C_i|_{\vec{y}}$ be the $y$-part of clause $C_i$, the disjunction of all $y$ literals in $C_i$. We call $S_\vx = \{C_i|_{\vec{x}} \mid C_i \text{ is a clause in $F$}\}$ and $S_\vy = \{C_i|_{\vec{y}} \mid C_i \text{ is a clause in $F$}\}$ the set of input and output clauses of the specification, respectively. 
    
 %This separation into input and output clauses can be exploited in a natural way to produce a compact decision list if one exists. 
In the following sections, we describe how to perform separate analyses of the input component $S_\vx$ and the output component $S_\vy$, and then how to combine these analyses into a single synthesis algorithm that alternates between the two components.
%Given a specification as  a CNF formula $F(\vx, \vy)$, we  first split every clause $C_i$ into an input clause $C_i|_\vx$ and an output clause $C_i|_\vy$. Next, we perform separate analyses on the input component $S_\vx$ of input clauses and the output component $S_\vy$ of output clauses. Then, we show that by alternating between the components we can synthesize a decision list for $F$ based on the MFS of $S_\vx$ and MSS of $S_\vy$.
    
    \subsection{Analysis of the Input Component}\label{sec:input_analysis}

    In this subsection we assume that the specification $F$ is realizable. First, consider a single assignment $\hx$ to the input variables $\vx$. Let $Fals(\hx) = \{ C_i|_\vx \in S_\vx \mid C_i|_\vx(\hx) = 0 \}$ be the subset of input clauses that $\hx$ falsifies. For a set $S'_\vx \subseteq S_\vx$ of input clauses, let $\Co(S'_\vx) = \{ C_i|_\vy \in S_\vy \mid C_i|_\vx \in S'_\vx \}$ be the corresponding set of output clauses and let $MustSat(\hx) = \Co(Fals(\hx))$. Note that $C_i \equiv (C_i|_\vx \lor C_i|_\vy) \equiv (\neg C_i|_\vx \rightarrow C_i|_\vy)$ for every clause $C_i$. Therefore  $MustSat(\hx)$ is the subset of output clauses that must be satisfied in order to satisfy $F$ when $\hx$ is the input assignment.
    
    A key observation is that for two different input assignments $\hx$ and $\hx'$, if $Fals(\hx') \subseteq Fals(\hx)$, then $MustSat(\hx') \subseteq MustSat(\hx)$, and therefore every output assignment $\hy$ that satisfies the specification for $\hx$ also satisfies the specification for $\hx'$. Hence, it is enough to consider only assignments for $\vx$ that falsify a maximal number of input clauses. This leads to the following lemma:
    
	\begin{lemma} \label{lemma:mfs}
		Let $M_\vx$ be an MFS of $S_\vx$, and $\hy$ be an assignment that satisfies $\Co(M_\vx)$. Then: (1) For every assignment $\hx$ such that $Fals(\hx) \subseteq M_\vx$,  the assignment $(\hx, \hy)$ satisfies $F(\vx,\vy)$; and 
			(2) There is no assignment $\hx$ such that $Fals(\hx) \supset M_\vx$.
	\end{lemma}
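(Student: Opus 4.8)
The plan is to verify the two claims directly from the definitions, using the two key observations that precede the lemma statement: that each clause $C_i$ is logically equivalent to $\neg C_i|_\vx \rightarrow C_i|_\vy$, and that $\Co$ is monotone with respect to set inclusion (if $S'_\vx \subseteq S''_\vx$ then $\Co(S'_\vx) \subseteq \Co(S''_\vx)$).

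For part (1), fix an assignment $\hx$ with $Fals(\hx) \subseteq M_\vx$. First I would argue that $(\hx,\hy)$ satisfies every clause $C_i$. Split into two cases according to whether $C_i|_\vx \in Fals(\hx)$. If $C_i|_\vx \notin Fals(\hx)$, then by definition of $Fals(\hx)$ we have $C_i|_\vx(\hx) = 1$, so $C_i(\hx,\hy) \geq C_i|_\vx(\hx) = 1$. If $C_i|_\vx \in Fals(\hx)$, then since $Fals(\hx) \subseteq M_\vx$, monotonicity of $\Co$ gives $C_i|_\vy \in \Co(Fals(\hx)) \subseteq \Co(M_\vx)$; since $\hy$ satisfies $\Co(M_\vx)$ by hypothesis, $C_i|_\vy(\hy) = 1$, hence $C_i(\hx,\hy) \geq C_i|_\vy(\hy) = 1$. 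In either case the clause is satisfied, so $(\hx,\hy)$ satisfies $F = \bigwedge_{i=1}^k C_i$. (Equivalently, one can phrase this via $MustSat(\hx) = \Co(Fals(\hx)) \subseteq \Co(M_\vx)$ and the observation that $\hy$ satisfying $MustSat(\hx)$ suffices to extend $\hx$ to a model of $F$.)

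For part (2), suppose toward a contradiction that some assignment $\hx$ satisfies $Fals(\hx) \supset M_\vx$. Then in particular $Fals(\hx)$ is a strict superset of $M_\vx$, and $Fals(\hx)$ is all-falsifiable, since $\hx$ itself is a witness: every clause in $Fals(\hx)$ is falsified by $\hx$ by definition. This contradicts the maximality of $M_\vx$ as an MFS of $S_\vx$, which requires that no proper superset of $M_\vx$ within $S_\vx$ be all-falsifiable. (Note $Fals(\hx) \subseteq S_\vx$ always holds, so the superset is indeed within $S_\vx$.) This completes the proof.

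I do not anticipate a serious obstacle here; the statement follows by unwinding definitions. The one point requiring a little care is making sure the monotonicity of $\Co$ and the equivalence $C_i \equiv \neg C_i|_\vx \rightarrow C_i|_\vy$ are invoked cleanly, and that in part (2) we correctly observe that $Fals(\hx)$ is automatically all-falsifiable with witness $\hx$, so that strict containment alone contradicts maximality.
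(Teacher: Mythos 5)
Your proof is correct and follows essentially the same route as the paper's: part (1) by observing that every clause whose input part is falsified by $\hx$ lies in $M_\vx$ and hence has its output part in $\Co(M_\vx)$, which $\hy$ satisfies, and part (2) directly from the maximality of the MFS. You simply spell out the case split and the witness argument for (2) in more detail than the paper does.
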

	
	\begin{proof}
(1) For every clause $C_i|_\vx \in Fals(\hx)$, since $C_i|_\vx \in M_\vx$, we have that $C_i|_\vy$ is in $\Co(M_\vx)$  and therefore is satisfied by $\hy$. Therefore, every clause $C_i$ in $F(\vx, \vy)$ that is not satisfied by $\hx$ is satisfied by $\hy$.
Note that (2) follows from $M_\vx$ being maximal.
\end{proof}
    
    From Lemma~\ref{lemma:mfs} and our assumption that $F(\vx, \vy)$ is realizable, we can conclude the following.
    
    \begin{corollary}
    $F$ can be implemented by a decision list of length equal to the number of MFS of $S_\vx$, where each $f_i$ in the decision list is of size linear in the size of the specification. 
    \end{corollary}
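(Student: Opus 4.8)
The plan is to build the decision list explicitly from an enumeration $M_1, \ldots, M_\ell$ of all MFSs of $S_\vx$ (so $\ell$ is the quantity in the statement), certifying correctness with Lemma~\ref{lemma:mfs}. The first step is to observe that each $M_j$ is associated with an input assignment $\hx_j$ that falsifies (at least) all of $M_j$: since every clause of $S_\vx$ mentions only $\vx$, the assignment witnessing that $M_j$ is all-falsifiable, restricted to $\vx$, yields such an $\hx_j$, so $M_j \subseteq Fals(\hx_j)$. Because $F$ is realizable, $\hx_j \in Dom(F)$, hence there is an output assignment $\hy_j$ with $(\hx_j, \hy_j)$ satisfying $F$; by the observation preceding Lemma~\ref{lemma:mfs}, $\hy_j$ satisfies $MustSat(\hx_j) = \Co(Fals(\hx_j)) \supseteq \Co(M_j)$. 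So each $M_j$ comes with an assignment $\hy_j$ satisfying $\Co(M_j)$ --- exactly the hypothesis of Lemma~\ref{lemma:mfs}(1). This is the only step that uses realizability.

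Next I would define, for each $j$, the guard $f_j := \bigwedge\{\, C_i|_\vx \mid C_i|_\vx \in S_\vx \setminus M_j \,\}$. Its defining property is that $f_j(\hx) = 1$ iff $\hx$ satisfies every input clause not in $M_j$, i.e. iff $Fals(\hx) \subseteq M_j$. Since $f_j$ is a conjunction of $x$-parts of clauses of $F$, its size is bounded by the total size of the input clauses, hence linear in the size of $F$, which gives the claimed bound on the $f_i$. The candidate implementation is then the decision list $L$ equal to \texttt{if $f_1(\vx)$ then $\hy_1$ else if $f_2(\vx)$ then $\hy_2$ else $\ldots$ else $\hy_\ell$} (whether the final guard $f_\ell$ is written out or dropped makes no difference, by the argument below).

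It then remains to show that $L$ implements $F$. Fix an arbitrary input $\hx$. The set $Fals(\hx)$ is all-falsifiable, witnessed by $\hx$ itself, and so extends to some MFS $M_j$; hence $f_j(\hx) = 1$, at least one guard fires, and $L(\hx)$ is defined. Let $j^\ast$ be the branch actually taken by $L$ on $\hx$. Then $Fals(\hx) \subseteq M_{j^\ast}$: this holds directly when $f_{j^\ast}(\hx) = 1$, and in the remaining (default-branch) case it holds because $M_\ell$ is then the only MFS containing $Fals(\hx)$. Since $\hy_{j^\ast}$ satisfies $\Co(M_{j^\ast})$, Lemma~\ref{lemma:mfs}(1) yields that $(\hx, \hy_{j^\ast})$ satisfies $F$, i.e. $L(\hx) \in Img_{\hx}(F)$. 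As $\hx$ was arbitrary and $F$ is realizable (so $Dom(F) = X$), $L$ implements $F$, and it has length $\ell$, the number of MFSs of $S_\vx$.

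The step I expect to require the most care is the coverage argument above: that every input fires some guard rests on the fact that any all-falsifiable set of input clauses extends to an MFS, and one must be careful that the branch $j^\ast$ actually selected by $L$ --- which need not be any MFS distinguished in advance for $\hx$ --- still satisfies the premise $Fals(\hx) \subseteq M_{j^\ast}$ of Lemma~\ref{lemma:mfs}(1). The remaining ingredients (realizability to produce each $\hy_j$, and the linear size of each $f_j$) are immediate from the definitions.
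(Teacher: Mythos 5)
Your construction is exactly the paper's: guards $f_i$ formed by conjoining the input clauses outside the $i$-th MFS, outputs taken as satisfying assignments of $\Co(M_i)$, with correctness via Lemma~\ref{lemma:mfs}. The proposal is correct and simply spells out the details (why realizability makes $\Co(M_j)$ satisfiable, and why every input is covered) that the paper leaves implicit.
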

    
    \begin{proof}
    Construct $f_i(\vx)$ by taking the conjunction of all input clauses $C|_\vx$ not contained in the $i$-th MFS $M_i$. Then, $f_i(\vx)$ is satisfied exactly by those assignments $\hx$ such that $Fals(\hx)$ is a subset of $M_i$.
    %(see appendix for details).
    Then, set the corresponding output assignment $\hy_i$ to an arbitrary satisfying assignment of $\Co(M_i)$.
   \end{proof} 
   
   \begin{example} \label{ex:mfs}
Let $F(x_1, x_2, y_1, y_2) = (x_1 \lor \neg x_2 \lor y_1) \land (x_1 \lor x_2 \lor \neg y_1) \land (x_2 \lor y_1 \lor \neg y_2) \land (\neg x_1 \lor x_2 \lor y_2)$. We first construct input clauses $S_\vx = \{(x_1 \lor \neg x_2), (x_1 \lor x_2), (x_2), (\neg x_1 \lor x_2)\}$ and output clauses $S_\vy = \{(y_1), (\neg y_1), (y_1 \lor \neg y_2), (y_2)\}$. $S_\vx$ has three MFS: $\{(x_1 \lor \neg x_2)\}$, $\{(x_1 \lor x_2), (x_2)\}$ and $\{(x_2), (\neg x_1 \lor x_2)\}$. From these MFS we can construct a decision list implementing $F$ in the way described above. Note that this decision list necessarily covers every possible input assignment:
\begin{align*}
&\texttt{if } (x_1 \lor x_2) \land (x_2) \land (\neg x_1 \lor x_2) \texttt{ then } (y_1 := 1; y_2 := 0) \\
&\texttt{else if } (x_1 \lor \neg x_2) \land (\neg x_1 \lor x_2) \texttt{ then }(y_1 := 0; y_2 := 0) \\
&\texttt{else if } (x_1 \lor \neg x_2) \land (x_1 \lor x_2) \texttt{ then }(y_1 := 1; y_2 := 1)
\end{align*}
\end{example}
    
    %Note that in the worst case, the number of MFS might still be as large as the number of possible assignments to $\vx$ , but for many formulas we can expect it to be smaller [[DF: give example or erase sentence].
    
    Note that we require $F(\vx, \vy)$ to be realizable because otherwise we cannot guarantee that $\Co(M_\vx)$ will be satisfiable for every MFS $M_\vx$ of the input clauses. If $\Co(M_\vx)$ is unsatisfiable, however, it is not enough to simply remove the corresponding $f_i(\vx)$ from the decision list, because there might be a subset $M'_\vx \subset M_\vx$ for which $Co(M'_\vx)$ is satisfiable.
    
    This is the first time to our knowledge that MFS are used for synthesis purposes. An advantage of enumerating MFS is that finding an MFS can be easily done, in a precise sense discussed below. One way to do this is through the \emph{conflict graph} of the set of input clauses~\cite{GS17}. Given a set of clauses $S$, the conflict graph of $S$ is the graph where every vertex corresponds to a clause in $S$, and there is an edge between two vertices iff the corresponding clauses have a complementary pair of literals between them (that is, the same variable appears in positive form in one clause and in negative form in the other). The complement of the conflict graph is called a \emph{consensus graph}~\cite{GS17}.
    
    %The consensus graph, in contrast, has an edge between two vertices iff the corresponding clauses do not have a complementary pair of literals.
    
   Since two clauses can be falsified at the same time iff there is no edge between them in the conflict graph, or alternatively there is an edge between them in the consensus graph, there is a one-to-one correspondence between MFS of the set of clauses, maximal independent sets (MIS) in the conflict graph, and maximal cliques in the consensus graph. Therefore, we can enumerate the MFS in a set of clauses by either enumerating MIS in the conflict graph or maximal cliques in the consensus graph. The benefit of this reduction is that maximal cliques display a so called \emph{polynomial-time listability}, meaning that finding a maximal clique can be performed in polynomial time, and therefore enumeration takes polynomial time in the number of maximal cliques~\cite{IMPM13}.

%MYV1: The claim in the next sentence requires a citation.
This relation between the set of  MFS and maximal cliques implies that the size of the smallest decision list that implements a given specification is upper bounded by the number of maximal cliques in the consensus graph of the input clauses.
%[[DF: please view closely]].
Therefore we have the following result.

\begin{theorem}\label{thm:compBreech1}
Synthesis can be performed in \textsc{P\textsuperscript{NP}} for specifications for which the consensus graph of $S_\vx$ has a polynomial number of maximal cliques (such as planar or chordal graphs).
\end{theorem}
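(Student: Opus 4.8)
The plan is to turn the construction behind the Corollary into an explicit \textsc{P\textsuperscript{NP}} algorithm, using the translation between MFSs of $S_\vx$ and maximal cliques of the consensus graph. On input $F=\bigwedge_i C_i$, the algorithm: (i) builds $S_\vx$ and its consensus graph; (ii) enumerates all maximal cliques $M_1,\dots,M_k$ of that graph, each of which is an MFS of $S_\vx$; (iii) for each $M_i$, forms $\Co(M_i)$ and uses a SAT oracle to extract a satisfying assignment $\hy_i$ of it, and sets $f_i(\vx)$ to be the conjunction of the $x$-parts of the clauses $C$ with $C|_\vx\notin M_i$; (iv) outputs the decision list \texttt{if $f_1$ then $\hy_1$ else $\dots$ else $\hy_k$}. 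That this list implements $F$ is exactly Lemma~\ref{lemma:mfs} together with the Corollary, so the only thing to verify is that every step stays within the \textsc{P\textsuperscript{NP}} budget.

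For (i), the consensus graph is polynomial to build: test each of the $O(k^2)$ pairs of input clauses for a complementary pair of literals. For (ii), maximal cliques can be listed with polynomial delay~\cite{IMPM13}, and the hypothesis caps their number at some polynomial $p(|F|)$, so the enumeration finishes in polynomial time using no oracle at all; the oracle is needed only from (iii) onward. For (iii), each witness $\hy_i$ is obtained by the standard self-reduction --- a linear number of queries ``is $\Co(M_i)$ satisfiable under the $y$-assignment fixed so far?'' --- so the algorithm issues $O(p(|F|)\cdot n)$ SAT queries interleaved with polynomial-time work. Finally, the output has length $p(|F|)$ and, as noted in the Corollary, each guard $f_i$ has size linear in $|F|$, so the algorithm runs in polynomial time with an NP oracle and produces a polynomial-size decision list.

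It remains to address the one place where the subsection's standing assumption is used: step (iii) requires $\Co(M_i)$ to be satisfiable for every MFS $M_i$, i.e. that $F$ be realizable --- and realizability is \textsc{co-NP\textsuperscript{NP}}-hard for arbitrary specifications, which would blow the budget if it had to be rechecked. The observation I would add is that the very same structural hypothesis also brings realizability testing into \textsc{P\textsuperscript{NP}}: $F$ is realizable if and only if $\Co(M_i)$ is satisfiable for every MFS $M_i$ of $S_\vx$. For the forward direction, each MFS $M_i$ is all-falsifiable, so some $\hx$ falsifies all of its clauses; by part (2) of Lemma~\ref{lemma:mfs} this forces $Fals(\hx)=M_i$ and hence $MustSat(\hx)=\Co(M_i)$, which must be satisfiable since $\hx\in Dom(F)=X$. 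For the converse, any $\hx$ has $Fals(\hx)$ all-falsifiable, hence contained in some MFS $M_i$, and a satisfying assignment of $\Co(M_i)$ satisfies the specification for $\hx$ by part (1) of Lemma~\ref{lemma:mfs}; so $Dom(F)=X$. This test is step (ii) followed by $p(|F|)$ SAT queries, so it fits in the budget, and lets the algorithm also certify (or refute) the promise.

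I expect the crux --- and the thing worth emphasizing in the writeup --- to be precisely that the ``polynomially many maximal cliques'' hypothesis does triple duty: it bounds the length of the synthesized decision list, it bounds the time to enumerate the MFSs, and it collapses the otherwise \textsc{co-NP\textsuperscript{NP}}-hard realizability question down to a polynomial batch of \textsc{NP} queries; the remaining bookkeeping (polynomial-delay clique listing, self-reductive witness extraction, linear-size guards) is routine. I would also remark that genuinely unrealizable specifications are not covered by this argument as stated --- an $\hx\in Dom(F)$ can have $Fals(\hx)$ contained only in MFSs with unsatisfiable $\Co$, so simply dropping those MFSs from the list need not preserve coverage of $Dom(F)$ --- and would leave that to the alternating MSS/MFS machinery developed later in the paper.
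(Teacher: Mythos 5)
Your proposal is correct and follows essentially the same route as the paper: build the consensus graph, enumerate its polynomially many maximal cliques (equivalently, the MFSs of $S_\vx$), and spend one batch of SAT-oracle queries per clique to extract an output assignment for the corresponding decision-list entry. The paper's own proof is a two-sentence sketch of exactly this; your additional observations --- that witness extraction is done by self-reduction, and that the same clique bound lets you verify the standing realizability assumption within \textsc{P\textsuperscript{NP}} rather than taking it on faith --- are correct refinements the paper leaves implicit.
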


\begin{proof}
Given a specification $F$, construct the consensus graph of the input component, enumerate the maximal cliques and for each one use a SAT solver to obtain a  corresponding satisfying assignment for the output clauses. Since the number of maximal cliques is polynomial, only a polynomial number of SAT calls is required.
  \end{proof} 
  
Theorem~\ref{thm:compBreech1} demonstrates an improvement relative to the general \textsc{co-NP\textsuperscript{NP}}-hardness of synthesis. Moreover, constructing the consensus graph of the input component is easy, as is testing for certain graph properties, such as planarity, that ensure a small number of maximal cliques. Therefore, Theorem~\ref{thm:compBreech1} provides an elegant method of deciding whether synthesis can be performed efficiently in practice before even beginning the synthesis process.
%[[DF: please fix]].
   
To summarize this section, the analysis of the input component provides two insights. First, a decision list implementing the specification can be constructed from the list of MFS of the input clauses. Second, analyzing the graph structure of the input component allows us to identify classes of specifications for which synthesis can be performed more efficiently. Note that this analysis, however, does not take into account the properties of the output component, and as such the  decision list produced by ignoring the output component may be longer than necessary. With that in mind, the next section presents a complementary analysis of the output component that can help to produce a smaller decision list.
	
	\subsection{Analysis of the Output Component} \label{sec:output_analysis}
	
	For the analysis of the output component, consider the set $MustSat(\hx)$, defined in the previous subsection, of output clauses that must be satisfied when $\hx$ is the input assignment. Then for every two input assignments $\hx$ and $\hx'$, if $MustSat(\hx') \subseteq MustSat(\hx)$,  every output assignment $\hy$ that satisfies the specification for $\hx$ also satisfies the specification for $\hx'$. Therefore, it is enough when constructing the decision list to consider only those satisfiable subsets of $S_\vy$ that are of maximal size. Similarly to Lemma~\ref{lemma:mfs} in the previous section, this insight allows us to state the following lemma:
    
    \begin{lemma} \label{lemma:mss}
		Let $M_\vy$ be an MSS of $S_\vy$ and $\hy$ be an assignment that satisfies $M_\vy$. Then: (1) for every assignment $\hx$ such that $MustSat(\hx) \subseteq M_\vy$,  the assignment $(\hx, \hy)$ satisfies $F(\vx,\vy)$; and (2) for every assignment $\hx$ such that $MustSat(\hx) \supset M_\vy$, there is no $\hy'$ such that the assignment $(\hx, \hy')$ satisfies $F(\vx, \vy)$.
	\end{lemma}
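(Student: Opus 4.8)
The plan is to mirror the proof of Lemma~\ref{lemma:mfs}, but now reading the clause semantics from the output side. Recall the identity $C_i \equiv (\neg C_i|_\vx \rightarrow C_i|_\vy)$, so that for a fixed input $\hx$ the specification $F(\hx, \vy)$ reduces, after discarding the clauses already satisfied by $\hx$, exactly to the conjunction of the output clauses in $MustSat(\hx) = \Co(Fals(\hx))$. This observation is the engine for both parts.

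For part (1), I would start from an assignment $\hx$ with $MustSat(\hx) \subseteq M_\vy$ and an assignment $\hy$ satisfying $M_\vy$. Since $\hy$ satisfies every clause in $M_\vy$, it in particular satisfies every clause in $MustSat(\hx)$. Now take an arbitrary clause $C_i$ of $F$. Either $C_i|_\vx(\hx) = 1$, in which case $C_i(\hx,\hy) = 1$ trivially; or $C_i|_\vx \in Fals(\hx)$, so $C_i|_\vy \in \Co(Fals(\hx)) = MustSat(\hx) \subseteq M_\vy$, and hence $C_i|_\vy(\hy) = 1$, giving $C_i(\hx,\hy) = 1$. As this holds for every clause, $(\hx,\hy)$ satisfies $F$.

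For part (2), suppose $MustSat(\hx) \supset M_\vy$. Since $M_\vy$ is an MSS of $S_\vy$, the strictly larger subset $MustSat(\hx)$ is not satisfiable: there is no assignment to $\vy$ making all its clauses true. But any $\hy'$ with $(\hx,\hy')$ satisfying $F$ would, by the argument above applied clause-by-clause, have to satisfy every $C_i|_\vy$ with $C_i|_\vx \in Fals(\hx)$, i.e.\ every clause in $MustSat(\hx)$, a contradiction. Hence no such $\hy'$ exists.

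I do not expect a genuine obstacle here; the only place that needs mild care is keeping the two directions of the correspondence between $F(\hx,\vy)$ and the clause set $MustSat(\hx)$ straight — that satisfying $F$ at $\hx$ \emph{forces} satisfaction of $MustSat(\hx)$ (used in part 2), while satisfaction of a \emph{superset} $M_\vy \supseteq MustSat(\hx)$ \emph{suffices} for satisfying $F$ at $\hx$ (used in part 1). One should also note explicitly, as in Lemma~\ref{lemma:mfs}, that nothing here requires $\hx$ to be drawn from $Dom(F)$ in part (1): realizability is not needed for this lemma, only the set-containment hypotheses.
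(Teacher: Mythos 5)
Your proof is correct and follows essentially the same route as the paper's: part (1) by a clause-by-clause case split using $MustSat(\hx) \subseteq M_\vy$, and part (2) by maximality of $M_\vy$ forcing $MustSat(\hx)$ to be unsatisfiable while any witness $\hy'$ would have to satisfy it. Your closing remark that realizability is not needed here also matches the paper's own observation following the lemma.
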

	
	\begin{proof}
		(1) Since $\hy$ satisfies every clause $C_i|_\vy$ in $M_\vy$, it must be that $\hy$ also satisfies every clause in $MustSat(\hx)$. Therefore, for every clause $C_i$ in $F$, either $C_i|_\vx$ is satisfied by $\hx$ (and therefore $C_i|_\vy \not\in MustSat(\hx)$) or $C_i|_\vy$ is satisfied by $\hy$. Therefore $(\hx, \hy)$ satisfies $F(\vx, \vy)$.
		(2) Since $M_\vy$ is maximal, then in this case $MustSat(\hx)$ must be unsatisfiable. Therefore there is no $\hy'$ that can satisfy all clauses that $\hx$ does not already satisfy.
	\end{proof}
    
    Therefore, similarly to the analysis of the input component, we have:
    
    \begin{corollary}
    $F$ can be implemented by a decision list of length equal to the number of MSS of $S_\vy$, where each $f_i$ in the decision list is of size linear in the size of the specification.
    \end{corollary}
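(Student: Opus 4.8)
The plan is to mirror the proof of the preceding corollary, but to drive the construction from the output component rather than the input one. I would enumerate the MSSs $M_1, \dots, M_k$ of $S_\vy$ and build a decision list of length $k$ whose $j$-th entry has test $f_j(\vx)$ and output assignment $\hy_j$. Since a test in a decision list must be a formula over $\vx$, I cannot use output clauses directly; instead I would define $f_j$ through the correspondence $\Co$, setting $f_j(\vx) = \bigwedge \{\, C_i|_\vx \mid C_i \text{ a clause of } F \text{ with } C_i|_\vy \notin M_j \,\}$. Each conjunct is then a sub-disjunction of a distinct clause of $F$, so the number of literals in $f_j$ is at most that of $F$, which is the claimed linear bound. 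For the output assignment I would take $\hy_j$ to be any satisfying assignment of $M_j$; this exists because $M_j$, being an MSS, is satisfiable.

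The first key step is to check that $f_j$ is exactly the indicator of $\{\hx \mid MustSat(\hx) \subseteq M_j\}$. Unwinding the definitions, $MustSat(\hx) = \Co(Fals(\hx)) = \{\, C_i|_\vy \mid C_i \text{ a clause of } F \text{ with } C_i|_\vx(\hx) = 0 \,\}$, so $MustSat(\hx) \subseteq M_j$ holds precisely when every clause $C_i$ with $C_i|_\vy \notin M_j$ has $C_i|_\vx(\hx) = 1$, which is exactly $f_j(\hx) = 1$. (The fact that $S_\vy$ is a set --- so two clauses may share an output part --- is harmless: an output clause enters $MustSat(\hx)$ as soon as \emph{some} clause with that output part has its input part falsified, and the conjunctive form of $f_j$ accounts for this by requiring the input part of \emph{every} such clause.) Given this identity, Lemma~\ref{lemma:mss}(1) shows at once that $(\hx, \hy_j)$ satisfies $F(\vx, \vy)$ for every $\hx$ with $f_j(\hx) = 1$, so each individual entry of the decision list is sound.

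The second key step is coverage. For any $\hx \in Dom(F)$, choose $\hy$ with $F(\hx, \hy) = 1$; then $\hy$ satisfies every clause whose input part $\hx$ falsifies, so $MustSat(\hx)$ is satisfiable and hence extends to some MSS $M_j$ of $S_\vy$, giving $f_j(\hx) = 1$. Thus, evaluating $f_1, f_2, \dots$ in order, the first test that fires produces a correct output by the previous step, and every $\hx \in Dom(F)$ fires at least one test, so the length-$k$ decision list implements $F$. (Carrying over the realizability assumption of the previous subsection, this covers all of $X$ and the resulting $g$ is total; without it, $g$ is simply the partial function defined exactly on $Dom(F)$, which is all the synthesis problem asks for, since $MustSat(\hx)$ is satisfiable precisely when $\hx \in Dom(F)$.)

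I do not expect a genuine obstacle: the argument is the formal dual of the MFS corollary. The one point I would flag is that the decision list is \emph{indexed by} MSSs of the \emph{output} clauses while its \emph{tests} remain formulas over the \emph{input} variables, the two sides being bridged by $\Co$; getting this translation right is exactly what makes both the linear size bound and the equivalence $f_j(\hx) = 1 \iff MustSat(\hx) \subseteq M_j$ hold, and hence is where I would be most careful.
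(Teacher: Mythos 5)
Your construction is exactly the paper's: the same $f_j(\vx) = \bigwedge\{C_i|_\vx \mid C_i|_\vy \notin M_j\}$, the same choice of $\hy_j$ as a satisfying assignment of $M_j$, and the same characterization $f_j(\hx)=1 \iff MustSat(\hx) \subseteq M_j$, with soundness via Lemma~\ref{lemma:mss}(1) and coverage via the observation that a satisfiable $MustSat(\hx)$ extends to some MSS. The proposal is correct and simply spells out details (the linear size bound, the partial-function case for unrealizable $F$) that the paper leaves implicit.
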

    
    \begin{proof}
    Construct $f_i(\vx)$ by taking the conjunction of all input clauses $C|_\vx$ such that $C|_\vy$ is not contained in the $i$-th MSS $M_i$. Then, $f_i(\vx)$ is satisfied exactly by those assignments $\hx$ such that $MustSat(\hx)$ is a subset of $M_i$.
%(see appendix for details).
Then, set the corresponding output assignment $\hy_i$ to an arbitrary satisfying assignment of $M_i$.
    \end{proof}
    
\begin{example} \label{ex:mss}
Let $F$, $S_\vx$ and $S_\vy$ be the same as in Example~\ref{ex:mfs}. $S_\vy$ has three MSS: $\{(y_1), (y_1 \lor \neg y_2), (y_2)\}$, $\{(\neg y_1), (y_1 \lor \neg y_2)\}$ and $\{(\neg y_1), (y_2)\}$. From these MSS we can construct a decision list implementing $F$ in the way described above. Note that some decisions in the list might be redundant:
\begin{align*}
&\texttt{if } (x_1 \lor x_2) \texttt{ then } (y_1 := 1; y_2 := 1) \\
&\texttt{else if } (x_1 \lor \neg x_2) \land (\neg x_1 \lor x_2) \texttt{ then }(y_1 := 0; y_2 := 0) \\
&\texttt{else if } (x_1 \lor \neg x_2) \land (x_2) \texttt{ then }(y_1 := 0; y_2 := 1)
\end{align*}
\end{example}
    
   Unlike the input component,  the output analysis does not require the specification to be realizable to produce the correct answer: for every input $\hx$ for which an output $\hy$ exists, $MustSat(\hx)$ will be contained in some MSS, and therefore will be covered by the decision list. On the other hand, we do not care about the case where an input $\hx$ has no corresponding output $\hy$. Note, however, that unlike the input component,  we do not have here a  simple graph structure that can be exploited to obtain the list of MSSs, and finding an MSS is clearly NP-hard. Therefore, it is unlikely for us to be able to efficiently identify instances where the number of MSS is polynomial.
    
    More importantly, however, is that taking into account only the  output component and ignoring the input component may also lead to a large decision list that includes many MSSs that would never be activated by an input. This fact emphasizes the drawbacks of independent synthesis of the components,
%\textcolor{red}{\sout{ as mentioned in Section~\ref{sec:theor}}},
and  motivates the development of an algorithm that combines the input and output analyses to produce a decision list that is smaller than either of the ones produced by each analysis individually.

\subsection{Alternating between Input and Output Components} \label{sec:back_and_forth}

Our next goal is to combine the input and output analyses obtained so far into a synthesis procedure that constructs a decision list of length upper-bounded by the minimum among the number of MFS of the input clauses and the number of MSS of the output clauses. %[[DF: can we say something about the minimal size of the decision list in terms of the MFS s and the MSSs?]]
%Furthermore, we would like to avoid enumerating more MFS and MSS than strictly necessary to construct the list.
Due to the restrictions of the input analysis, if the specification is unrealizable the procedure terminates without producing a decision list. Extending the synthesis to unrealizable specifications is left for future work.
We first state the following lemma:

\begin{lemma} \label{lemma:mfs_mss}
If $F(\vx, \vy)$ is realizable, then for every MFS $M_\vx$ of $S_\vx$, $\Co(M_\vx) \subseteq M_\vy$ for some MSS $M_\vy$ of $S_\vy$.
\end{lemma}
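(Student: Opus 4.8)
The plan is to reduce the statement to two elementary facts: that $\Co(M_\vx)$ is a satisfiable subset of $S_\vy$, and that any satisfiable subset of a finite clause set extends to a maximal one. First I would unwind what it means for $M_\vx$ to be an MFS of $S_\vx$. Since $M_\vx$ is all-falsifiable, there is an assignment $\hx$ to the input variables with $C_i|_\vx(\hx) = 0$ for every $C_i|_\vx \in M_\vx$; that is, $M_\vx \subseteq Fals(\hx)$. Applying the $\Co$ operator, which is clearly monotone with respect to inclusion, and recalling $MustSat(\hx) = \Co(Fals(\hx))$, this yields $\Co(M_\vx) \subseteq MustSat(\hx)$.

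The second step is to invoke realizability. Since $F$ is realizable, $\hx \in Dom(F)$, so there is an output assignment $\hy$ with $F(\hx, \hy) = 1$. Using the equivalence $C_i \equiv (\neg C_i|_\vx \rightarrow C_i|_\vy)$ noted before Lemma~\ref{lemma:mfs}, every clause $C_i$ whose $x$-part is falsified by $\hx$ must have its $y$-part satisfied by $\hy$; hence $\hy$ satisfies all of $MustSat(\hx)$, and in particular all of $\Co(M_\vx)$. Therefore $\Co(M_\vx)$ is satisfiable. (This is essentially the content of Lemma~\ref{lemma:mfs}(1) restricted to the assignment $\hx$, so one could also cite that lemma directly instead of re-deriving it.) Finally, because $S_\vy$ is finite, I would greedily enlarge $\Co(M_\vx)$ by adding clauses of $S_\vy$ one at a time as long as satisfiability is preserved; this process terminates at a satisfiable set $M_\vy$ no proper superset of which (within $S_\vy$) is satisfiable, i.e. an MSS, and by construction $\Co(M_\vx) \subseteq M_\vy$, which is the claim.

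I do not expect a genuine obstacle here. The only mild subtlety is that we do \emph{not} need $Fals(\hx)$ to equal $M_\vx$; the maximality of $M_\vx$ together with Lemma~\ref{lemma:mfs}(2) would give that, but the one-sided inclusion $M_\vx \subseteq Fals(\hx)$ plus monotonicity of $\Co$ already suffices. The place where the argument genuinely uses its hypothesis is realizability: without it, an input $\hx$ with $Fals(\hx) \supseteq M_\vx$ might have no matching output, $\Co(M_\vx)$ could be unsatisfiable, and then it would sit inside no MSS of $S_\vy$.
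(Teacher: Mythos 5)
Your proof is correct and follows essentially the same route as the paper's: pick a witness $\hx$ falsifying $M_\vx$, use realizability to see that $\Co(M_\vx)$ is satisfiable, and extend to an MSS. The only (harmless) difference is that the paper asserts $Fals(\hx) = M_\vx$ via maximality, whereas you correctly observe that the inclusion $M_\vx \subseteq Fals(\hx)$ together with monotonicity of $\Co$ already suffices.
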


\begin{proof}
For every MFS $M_\vx$, since $M_\vx$ is all-falsifiable, there exists an input assignment $\hx$ such that $Fals(\hx) = M_\vx$. Then, since $F$ is realizable, $MustSat(\hx) = \Co(M_\vx)$ is satisfiable, and therefore is contained in some MSS.
\end{proof}

Given an MFS $M_\vx$ for the input clauses, we say that an MSS $M_\vy$ for the output clauses \emph{covers} $M_\vx$ if $\Co(M_\vx) \subseteq M_\vy$. Lemma~\ref{lemma:mfs_mss} says that for every MFS $M_\vx$, there exists at least one MSS $M_\vy$ that covers $M_\vx$. Therefore, instead of producing a satisfying assignment for $\Co(M_\vx)$, we can produce a satisfying assignment for $M_\vy$. In fact, such satisfying assignment also takes care of every other MFS covered by $M_\vy$, making it unnecessary to generate them.

The above insight gives rise to Algorithm~\ref{alg:back_and_forth}, which we call the "Back-and-Forth" algorithm. In this algorithm, we maintain a list $L$ of MSSs that is initially empty. At every iteration of the algorithm, we produce a new MFS that is not covered by the MSSs already in $L$. Then, we find an MSS that covers this new MFS. If no such MSS exists, it means the specification is unrealizable, and so the algorithm emits an error message and terminates. Otherwise, we add this MSS to $L$. After all the  MFS have been covered, we construct a decision list from
the obtained list $L$ of MSS in the same way as described in Section~\ref{sec:output_analysis}: $f_i(\vx)$ is a formula that is satisfied exactly when $MustSat(\vx)$ is a subset of the $i$-th MSS, and the corresponding output assignment $\hy_i$ is a satisfying assignment for that MSS.

\begin{algorithm}[t]
\begin{algorithmic}[1]
\State initialize a list of MSSs $L$ to the empty list
\While{there are still MFS left to generate}
	\State $M_\vx \gets \text{MFS of $S_\vx$ not covered by any MSS in $L$}$ \label{line:mfs}
    \If{MSS $M_\vy \subseteq S_\vy$ covering $M_\vx$ exists} \label{line:mss}
    	%\State $\hy \gets \text{satisfying assignment of $M_\vy$}$
    	\State add $M_\vy$ to $L$
    \Else
    	\State FAIL: specification is unrealizable
    \EndIf
\EndWhile
\State construct decision list from $L$
\end{algorithmic}
\caption{Back-and-Forth synthesis algorithm combining MFS and MSS analysis.}
\label{alg:back_and_forth}
\end{algorithm}

\begin{example}
Let $F$, $S_\vx$ and $S_\vy$ be the same as in Examples~\ref{ex:mfs} and~\ref{ex:mss}. In the first iteration, we generate the MFS $M^1_\vx = \{(x_1 \lor \neg x_2)\}$. Then, we expand $Co(M^1_\vx) = \{(y_1)\}$ into the MSS $M^1_\vy = \{(y_1), (y_1 \lor \neg y_2), (y_2)\}$ and add $M^1_\vy$ to $L$. Note that $M^1_\vy$ also covers, besides $M^1_\vx$, the MFS $\{(x_2), (\neg x_1 \lor x_2)\}$, and therefore this MFS will not need to be  generated. The only remaining MFS is $M^2_\vx = \{(x_1 \lor x_2), (x_2)\}$. $M^2_\vy = Co(M^2_\vx) = \{(\neg y_1), (y_1 \lor \neg y_2)\}$ is already an MSS, so we add it to $L$. Since all MFS have been covered, the procedure terminates. Note that we did not need to add the MSS $\{(\neg y_1), (y_2)\}$ to $L$, since no MFS is covered by this MSS. From $L$, we can now construct a decision list as described earlier:
\begin{align*}
&\texttt{if } (x_1 \lor x_2) \texttt{ then } (y_1 := 1; y_2 := 1) \\
&\texttt{else if } (x_1 \lor \neg x_2) \land (\neg x_1 \lor x_2) \texttt{ then }(y_1 := 0; y_2 := 0)
\end{align*}
\end{example}

\paragraph{Implementation details} The key steps of Algorithm~\ref{alg:back_and_forth} are the generation of the MFS $M_\vx$ in line~\ref{line:mfs} and the MSS $M_\vy$ in line~\ref{line:mss}. These steps are similar to the input and output analyses in Sections~\ref{sec:input_analysis} and~\ref{sec:output_analysis}. Since, however, we use communication between the input and output components, we have additional constraints on the MFS and MSS being generated. At each step the generated MFS must not be covered by the previously-generated MSSs, and the generated MSS must cover the most recently generated MFS.

While generating an arbitrary MFS can be done in polynomial time, we prove that adding the restriction that the MFS must not be covered by a previous MSS makes the MFS generation an NP-complete problem (see 
appendix
%extended version of the paper~\cite{extended}
for proper theorem and proof). Therefore, we implement the MFS generation in the following way. First, we use a SAT solver as an NP oracle to find an (not-necessarily maximal) all-falsifiable subset of $S_\vx$ not covered by the previous MSSs. Then, we  extend this subset to an MFS by iterating over the remaining input clauses and at each step adding to the growing set a clause that does not conflict with the clauses already present in that set. This process of obtaining an MFS from $S_\vx$ is easier to implement when we use the conflict graph representation of $S_\vx$. Given $k$ previous MSSs $M_1, \ldots, M_k$ and the conflict graph $G = (V, E)$, we use the following SAT query to generate an all-falsifiable subset:

\begin{equation*}
\varphi\equiv\bigwedge^k_{i=1} \left(\bigvee_{C_j|_\vy~\in~S_\vy \setminus M_i} z_j\right) \land \bigwedge_{(C_i|_\vx, C_j|_\vx)~\in~E} (\neg z_i \lor \neg z_j)
\end{equation*}

We use variable $z_i$ to indicate whether clause $C_i|_\vx$ is present in the all-falsifiable subset. The first conjunction encodes that for every previous MSS, the subset must include a clause $C_j|_\vx$ not covered by that MSS. The second conjunction expresses that if two clauses conflict with each other, they cannot both be added to the subset. Note that whenever we generate a new MFS, we only need to add extra clauses of the first form
%[[DF: what constraints?]]
to this query, allowing us to employ incremental capabilities of SAT solvers.

After extending the subset produced by the SAT solver to an MFS $M_\vx$, we have to generate a new MSS $M_\vy$ that covers $M_\vx$. For that we use a partial MaxSAT solver as an oracle. In a partial MaxSAT problem, some clauses are set as hard clauses and others are set as soft clauses~\cite{ABL09}. The solver then returns an assignment that satisfies all hard clauses and the maximum possible number of soft clauses. We call the MaxSAT solver on the set of output clauses $S_\vy$, where the clauses in $\Co(M_\vx)$ are set as hard clauses, and all other clauses are set as soft clauses. This way, the MaxSAT solver is guaranteed to return a satisfiable set of clauses containing $\Co(M_\vx)$ and of maximum size. Since a satisfiable subset of maximum size is necessarily maximal, the satisfied clauses returned by the MaxSAT solver is an MSS, as desired.

\paragraph{Analysis and Correctness} Since exactly one new MFS and one new MSS are generated at every iteration, the number of iterations in Algorithm~\ref{alg:back_and_forth} is upper bounded by $\min(\#MFS, \#MSS)$. Yet, since Algorithm~\ref{alg:back_and_forth} does not generate redundant MFS and MSS, the number of iterations, and thus the size of the decision list, can be much smaller.

We now formalize and prove the correctness of Algorithm~\ref{alg:back_and_forth}.

\begin{lemma} \label{lemma:back_and_forth}
For a realizable specification $F(\vec{x}, \vec{y})$, let $\langle (f_1, \hy_1), \ldots, (f_k, \hy_k) \rangle$ be the decision list produced by Algorithm~\ref{alg:back_and_forth}. Then (1) For every $\hx$ such that $f_i(\hx) = 1$, $F(\hx, \hy_i) = 1$;
(2) For every $\hx$ there is at least one $i$ such that $f_i(\hx) = 1$.
\end{lemma}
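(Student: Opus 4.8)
The plan is to derive both claims directly from the two lemmas already established for the output component, namely Lemma~\ref{lemma:mss} and Lemma~\ref{lemma:mfs_mss}, together with the construction of the decision list from the list $L$ described in Section~\ref{sec:output_analysis}. Recall that each pair $(f_i, \hy_i)$ in the decision list corresponds to an MSS $M^i_\vy$ in $L$, where $f_i(\vx)$ is the conjunction of all input clauses $C|_\vx$ such that $C|_\vy \notin M^i_\vy$, and $\hy_i$ is a satisfying assignment of $M^i_\vy$. The first observation to record is that $f_i(\hx) = 1$ holds exactly when $MustSat(\hx) \subseteq M^i_\vy$: indeed, $f_i(\hx) = 1$ iff $\hx$ satisfies every input clause whose output part lies outside $M^i_\vy$, which is iff no such clause lies in $Fals(\hx)$, which is iff $\Co(Fals(\hx)) = MustSat(\hx) \subseteq M^i_\vy$.

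For part (1), I would fix $\hx$ with $f_i(\hx) = 1$. By the observation above, $MustSat(\hx) \subseteq M^i_\vy$, and $\hy_i$ satisfies $M^i_\vy$ by construction, so we are exactly in the hypothesis of case (1) of Lemma~\ref{lemma:mss}; that lemma then immediately gives $F(\hx, \hy_i) = 1$.

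For part (2), I would fix an arbitrary $\hx$ and exhibit some $i$ with $f_i(\hx) = 1$. Consider $Fals(\hx) \subseteq S_\vx$. Since $Fals(\hx)$ is all-falsifiable (witnessed by $\hx$ itself), it extends to some MFS $M_\vx$ of $S_\vx$. The termination condition of Algorithm~\ref{alg:back_and_forth} guarantees that the while loop only exits once every MFS of $S_\vx$ is covered by some MSS in $L$; in particular $M_\vx$ is covered, i.e. there is an index $i$ with $\Co(M_\vx) \subseteq M^i_\vy$. (This is where realizability and Lemma~\ref{lemma:mfs_mss} are implicitly used: they ensure the loop never hits the FAIL branch, so every MFS actually does get a covering MSS.) Since $Fals(\hx) \subseteq M_\vx$, monotonicity of $\Co$ gives $MustSat(\hx) = \Co(Fals(\hx)) \subseteq \Co(M_\vx) \subseteq M^i_\vy$, and by the observation above this means $f_i(\hx) = 1$, as required.

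The only genuinely delicate point — and the step I would be most careful about — is the claim that on termination every MFS of $S_\vx$ is covered by some MSS in $L$. The loop condition "there are still MFS left to generate" must be read as: there still exists an MFS not covered by any current member of $L$. One should note that each iteration adds an MSS covering at least the newly generated MFS, so progress is made and the loop terminates (the number of MFS is finite); and upon exit, by definition of the loop guard, no uncovered MFS remains. Everything else is a routine unfolding of the definitions of $f_i$, $MustSat$, $Fals$, and $\Co$, plus the monotonicity of $\Co$ under subset inclusion.
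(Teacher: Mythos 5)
Your proposal is correct and follows essentially the same route as the paper's own proof: part (1) unfolds the construction of $f_i$ from the $i$-th MSS (the paper argues this directly rather than citing Lemma~\ref{lemma:mss}, but the content is identical), and part (2) extends $Fals(\hx)$ to an MFS, observes that on termination every MFS is covered by some MSS in $L$, and concludes via monotonicity of $\Co$. Your explicit remark about reading the loop guard as ``some MFS remains uncovered'' is a fair clarification of what the paper leaves implicit, but it does not change the argument.
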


\begin{proof}
(1) Let $M_\vy$ be the $i$-th MSS generated by the algorithm. Then, by construction, $f_i(\hx) = 1$ iff $MustSat(\hx) \subseteq M_\vy$, and $\hy_i$ is a satisfying assignment to $M_\vy$. Therefore, if $f_i(\hx) = 1$ then $\hy_i$ satisfies $MustSat(\hx)$, and so $(\hx, \hy_i)$ satisfies $F$.

(2) For every $\hx$, there exists an MFS $M_\vx$ such that $Fals(\hx) \subseteq M_\vx$. If $M_\vx$ was generated by the algorithm, then an MSS $M_\vy$ that covers $M_\vx$ was added to the MSS list. If $M_\vx$ was not generated by the algorithm, it must be because there was already a previously generated MSS $M_\vy$ that covers $M_\vx$. Either way, since $M_\vy$ covers $M_\vx$ and $Fals(\hx) \subseteq M_\vx$, $M_\vy$ covers $Fals(\hx)$. Therefore, the corresponding $f_i$ in the decision list is such that $f_i(\hx) = 1$.
\end{proof}

From Lemma \ref{lemma:back_and_forth}  we obtain the following corollary.

\begin{corollary}
Given a realizable specification $F(\vec{x}, \vec{y})$, the decision list produced by Algorithm~\ref{alg:back_and_forth} implements $F$.
\end{corollary}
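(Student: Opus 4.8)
The plan is to unwind the definition of ``implements'' and then invoke the two parts of Lemma~\ref{lemma:back_and_forth} directly. Recall that the decision list $\langle (f_1, \hy_1), \ldots, (f_k, \hy_k) \rangle$ denotes the partial function $g$ defined by $g(\hx) = \hy_i$, where $i$ is the least index with $f_i(\hx) = 1$ (and $g(\hx)$ is undefined if no such index exists). To show $g$ implements $F$, I must check that $(\hx, g(\hx)) \in F$ for every $\hx \in Dom(F)$.

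First I would observe that since $F$ is realizable, $Dom(F) = X$, so it suffices to show that $g$ is total and that $F(\hx, g(\hx)) = 1$ for every $\hx \in X$. Totality is exactly part~(2) of Lemma~\ref{lemma:back_and_forth}: for every $\hx$ there is some $i$ with $f_i(\hx) = 1$, so the least such $i$ exists and $g(\hx)$ is well defined. Next, fix $\hx$ and let $i$ be that least index, so that $g(\hx) = \hy_i$; by part~(1) of Lemma~\ref{lemma:back_and_forth}, $f_i(\hx) = 1$ implies $F(\hx, \hy_i) = 1$, i.e.\ $(\hx, g(\hx)) \in F$. This holds for all $\hx$, so $g$ implements $F$, as required.

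There is essentially no obstacle here: the corollary is an immediate consequence of Lemma~\ref{lemma:back_and_forth}, which already did the real work (part~(1) establishing soundness of each branch via the $MustSat$ characterization, and part~(2) establishing coverage of every input through the MFS/MSS covering relation). The only minor point to state carefully is the passage from ``decision list'' to ``partial function'': one should note that taking the \emph{least} satisfying index is what turns the list into a function, and that part~(1) applies to whichever index is selected, so the choice of the least one is harmless. One could also remark that realizability is used only to guarantee that Algorithm~\ref{alg:back_and_forth} actually returns a decision list (it does not FAIL) and that $Dom(F) = X$; without realizability the statement would have to be weakened to a partial function covering $Dom(F)$, which is outside the scope here.
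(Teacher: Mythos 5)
Your proof is correct and matches the paper's approach exactly: the paper derives this corollary immediately from Lemma~\ref{lemma:back_and_forth}, with part~(2) supplying totality of the decision-list function and part~(1) supplying correctness of each selected branch. Your additional care about the least-index semantics and the role of realizability is sound but not needed beyond what the lemma already provides.
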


It is worth noting that if the number of MFS is small as discussed in  Section~\ref{sec:input_analysis}, then purely enumerating MFS, as in Section~\ref{sec:input_analysis} can be theoretically faster than using Algorithm ~\ref{alg:back_and_forth}. That is because finding an MFS can be done in polynomial time, while Algorithm \ref{alg:back_and_forth} requires calls to a SAT and MaxSAT solvers. In practice, however, we observed that  the Back-and-Forth algorithm often avoids a large number of redundant MFS, which makes up for the extra complexity in generating each MFS. Still, for specifications that are known to have a small number of MFS,  restriction to the analysis of the input component as in Section~\ref{sec:input_analysis}  can be sufficient.

\subsection{Partitioning the Specification into Distinct Output Variables} \label{sec:cc}

Some of the cases in the back-and-forth analysis which cause the number of MFS or MSS to be exponential can be simplified by partitioning the specification into sets of clauses that do not share output variables. As an example, consider the specification for the identity function:
$$F(\vx, \vy) = (x_1 \leftrightarrow y_1) \land \ldots \land (x_k \leftrightarrow y_k)$$
or in a CNF form:
$$F(\vx, \vy) = (\neg x_1 \lor y_1) \land (x_1 \lor \neg y_1) \land \ldots \land (\neg x_k \lor y_k) \land (x_k \lor \neg y_k)$$

It is easy to see that both the number of MFS and MSS for this formula are $2^k$. Each output variable, however, does not appear in the same clause with other output variables. Therefore, we can consider each pair $(\neg x_i \lor y_i) \land (x_i \lor \neg y_i)$ of clauses as a separate specification and synthesize it independently as a decision list of size 2. As such, the total number of MFS and MSS grow linearly with $k$.  

Therefore we propose the following preprocessing step.
%In general, partitioning the set of clauses based on the output variables can be easily  performed in the following way:

\begin{enumerate}
\item Given the specification $F$, construct a graph with a vertex for each clause and an edge between two vertices iff the corresponding clauses share an output variable.
\item Separate the graph into connected components $\mathbb{C}_1,\ldots,\mathbb{C}_k$. Note that the $\mathbb{C}_i$ are completely disjoint in terms of output variables.
\item For every $\mathbb{C}_i$, define a sub-specification $F_i$ by taking only the clauses in $F$ whose corresponding vertex is in $\mathbb{C}_i$.
\item  Call Algorithm~\ref{alg:back_and_forth} for each specification $F_i$. This gives us a decision list $D_i$ for $F_i$ that decides on an assignment for only the output variables in $F_i$.
\end{enumerate}

Since the $F_i$ have disjoint sets of output variables, every $D_i$ decides on an assignment for a different partition of output variables. Therefore, given an input $\hx$ we can produce a corresponding output $\hy$ by simply evaluating each $D_i$ independently on $\hx$ and combining the results.

%Note that the generated connected components are completely disjoint in terms of output variables,  and the Therefore we can call Algorithm~\ref{alg:back_and_forth} to synthesize each component independently. %We can go even further and repeat this partitioning process to the subsets $Y'$ during the execution of the algorithm, which might reduce the number of MSS being enumerated in lines~\ref{line:ifbegin}-\ref{line:ifend}.
%\input{ScenarioBasedProg}
\section{Experimental Evaluation}\label{sec:exp}

In order to evaluate the performance of the Back-and-Forth synthesis algorithm, we ran the algorithm on benchmarks from the 2QBF track of the QBFEVAL'16 QBF-solving competition~\cite{QBFEval}. This track is composed of QBF benchmarks of the form $\forall \vec{x} . \exists \vec{y} . F(\vec{x}, \vec{y})$, where $F$ is a CNF formula. We can see these benchmarks as synthesis problems asking if we can synthesize a Skolem function for the existential variables in terms of the universal variables such that the formula $F$ is satisfied. For this experimental evaluation we  used only those benchmarks that are realizable, since adjusting the Back-and-Forth algorithm to handle unrealizable benchmarks is future work. The benchmarks can be classified into seven families: \textsc{MutexP} (7 instances), \textsc{Qshifter} (6 instances), \textsc{RankingFunctions} (49 instances), \textsc{ReductionFinding} (34 instances), \textsc{SortingNetworks} (22 instances), \textsc{Tree} (5 instances) and \textsc{FixpointDetection} (93 instances). Because benchmarks in the same family tend to have similar properties, it makes sense to evaluate performance over each family, rather than over specific instances.

We compared the running time of the Back-and-Forth algorithm on these benchmarks with three state-of-the-art tools that employ different synthesis approaches: the CDCL-based CADET~\cite{RS2016}, the ROBDD-based RSynth~\cite{TV17}, and the CEGAR-based BFSS~\cite{ACGKS2018}. Since the Back-and-Forth algorithm, CADET and RSynth are all sequential algorithms, to ensure fair comparison of computational effort, the version of BFSS used was compiled with the MiniSAT SAT solver~\cite{ES03} instead of the parallelized UniGen sampler used in~\cite{ACGKS2018}. We leave for future work the exploration of performance of the different tools in a parallel scenario.

Our implementation of the Back-and-Forth algorithm used the Glucose SAT solver~\cite{AS09}, based on MiniSAT, and the Open-WBO MaxSAT solver~\cite{MML14}. The implementation also used the partitioning described in Section~\ref{sec:cc}.
All experiments were executed in the DAVinCI cluster at Rice University, consisting of 192 Westmere nodes of 12 processor cores each, running at 2.83 GHz with 4 GB of RAM per core, and 6 Sandy Bridge nodes of 16 processor cores each, running at 2.2 GHz with 8 GB of RAM per core.  Our algorithm has not been parallelized, so the cluster was solely used to run multiple experiments simultaneously. Each instance had a timeout of 8 hours.

Figure~\ref{fig:families} shows for each family the percentage of instances each tool was able to solve in the time limit. We can divide the results into three parts:

\begin{figure}[!t]
\centering
\includegraphics[width=\columnwidth]{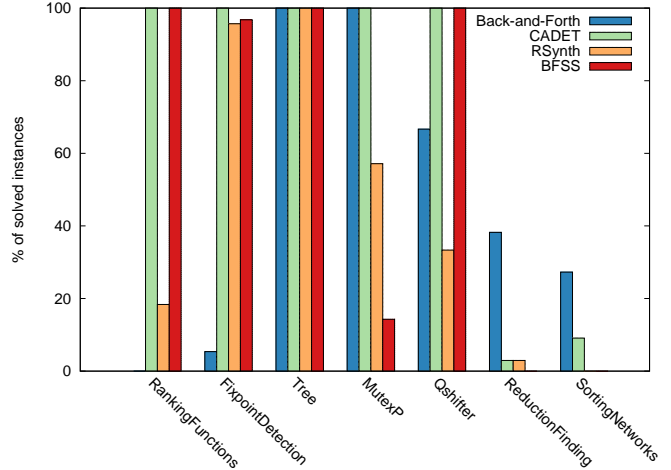}
\caption{Percentage of instances solved by each synthesis algorithm for each of the benchmark families.}
\label{fig:families}
\end{figure}

In the \textsc{RankingFunctions} and \textsc{FixpointDetection} families the Back-and-Forth algorithm timed out on almost all instances, only being able to solve the easiest instances of \textsc{FixpointDetection}. CADET, on the other hand, performed very well, being able to solve all instances. RSynth and BFSS also outperformed the Back-and-Forth algorithm, although they did not perform as well as CADET.

The \textsc{Tree}, \textsc{MutexP}, and \textsc{Qshifter}  families
%, which seem to be the simplest families [[DF: perhaps erase that remark]],
had almost all instances solved by the Back-and-Forth algorithm in under 45 seconds (except for the two hardest instances of \textsc{Qshifter}, which timed out), in many cases outperforming RSynth or BFSS. Even so, CADET still performed the best in these classes, solving all instances faster than our algorithm.

Lastly, \textsc{ReductionFinding} and \textsc{SortingNetworks} seem to be the most challenging families for existing tools, with CADET only being able to solve two instances in total, RSynth one, and BFSS none. In contrast, our Back-and-Forth algorithm solved 13 cases in \textsc{ReductionFinding} and 6 in \textsc{SortingNetworks}. Furthermore, as can be seen in Figure~\ref{fig:instances}, every instance that was solved by other tools  was also solved by the Back-and-Forth algorithm, which was faster by over an order of magnitude.

% \begin{figure}[!t]
% \centering
% \includegraphics[width=\columnwidth]{sortnet}
% \caption{Comparison of scalability between the Back-and-Forth algorithm and CADET for the \textsc{SortingNetworks} family.}
% \label{fig:sortnet}
% \end{figure}

\begin{figure}[!t]
\centering
\includegraphics[width=\columnwidth]{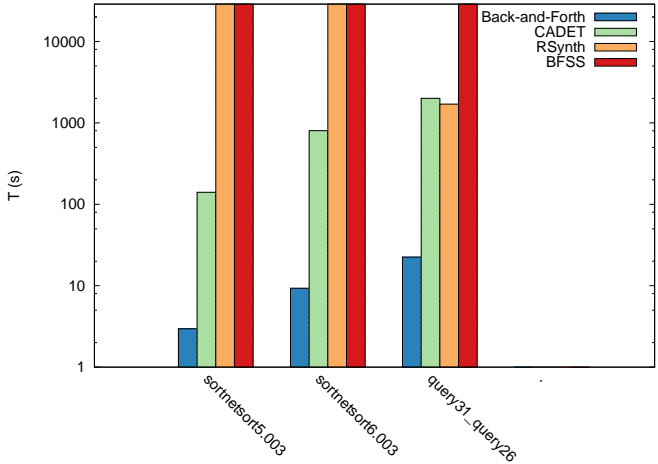}
\caption{Running time, in seconds, of each synthesis algorithm on instances of the \textsc{ReductionFinding} and \textsc{SortingNetworks} families that were solved by at least one algorithm besides Back-and-Forth. Bars of maximum height indicate the algorithm timed out on the benchmark.}
\label{fig:instances}
\end{figure}

In summary, the Back-and-Forth algorithm performed competitively in 5 out of 7 families, and was strictly superior in 2 out of 7 families.
Due to the difficulty of analyzing CNF formulas, the exact reason why the algorithm performs well in these particular families and not in others remains an open question, to be explored in future work. Still, the results suggest that the Back-and-Forth algorithm can serve as a good complement to modern synthesis tools, performing well exactly in the cases in which these tools struggle the most, and therefore it would be a good candidate for membership in a portfolio of synthesis algorithms. %It might also be possible that with further optimizations, the algorithm would be able to synthesize more challenging instances in the \textsc{ReductionFinding} and \textsc{SortingNetwokrs} families.

\section{Discussion}\label{sec:discussion}

%\begin{itemize}
%	\item We described a decomposition framework.
%     \item we showed a case where it can lead to an efficient decomposition.
%     \item We still need to find benchmarks that fit into this decomposition. This is because such writing of decomposition is counter intuitive: instead of relating the input to output we focus on the graph of the input itself and ignore the output. We need to think about that.
%     \item There are other decomposition frameworks that we can think of in the future.
%\end{itemize}
%

A recurrent observation in recent
evaluations~\cite{JSCTA15,ACJS17,ACGKS2018,TV17} of Boolean functional
synthesis tools has been that no single tool or algorithm dominates
the others in all classes of benchmarks.  To build industry-strength
Boolean functional solvers, it is therefore inevitable that a
portfolio approach be adopted.  Since decomposition-based techniques
(beyond factored specifications) have not been used in
existing tools so far, our original motivation was to develop a
decomposition-centric framework for Boolean functional synthesis that
complements (rather than dominates) the strengths of existing tools.
As our experiments with the Back-and-Forth algorithm show, we have
been able to take the first few steps in this direction by
successfully solving some classes of benchmarks that state-of-the-art
tools choke on.  While we have tried to understand features of these
benchmarks that make them particularly amenable to our technique,
a lot more work remains to be done to elucidate this relation clearly.
% Indeed, exploiting structure in the specification (viz. size of MIS in
% conflict graph of the input clauses, ref Section~\ref{sec:amicable})
% to optimize syntehsis is a promising area of research that deserves
% deeper study.

Yet another motivation for exploring a decomposition-centric synthesis approach was to be able to generate Skolem functions in a format that lends itself to easy independent validation by domain experts.
Interestingly, despite the singular importance of this aspect, it has been largely ignored by existing Boolean functional synthesis tools, most of which construct a circuit representation of the function using an acyclic-graph data structure such as an ROBDD or an And-Inverter Graph. While these are known to be efficient representations of Boolean functions, they are not amenable to easy validation by a domain expert, especially when their sizes are large, often requiring a satisfiability solver to check that the generated Skolem functions indeed satisfy the specifications.
Synthesizing functions as decision lists is a natural and well-studied choice for meeting this objective. Along with each decision in the decision list, we can also identify the clauses that contribute to the generation of the outputs (these are clauses whose input components are falsified by the decision), thereby providing clues about which part of the specification is responsible for the outputs generated in a particular branch of the decision-list representation. Our work shows that decomposition-based techniques lend themselves easily to such representations.

In order to be consistent with performance comparison experiments
reported in the literature, all specifications used in our evaluation
were prenex CNF (PCNF) formulas taken from the QBFEVAL'16 benchmark suite.  While
this certainly presents challenging instances of Boolean functional
synthesis, PCNF is not a natural choice of representing specifications
in several important application areas.  For example, the industry
standard (IEC 1131-3) for reactive programs for programmable logic
controllers (PLC) includes a set of languages that allow the user to
specify combinations of outputs based on different combinations of
input conditions. The same is also true in the specification of
several bus protocols like the VME Bus or AMBA Bus.
%[[DF: how is this sentence related to previous one?]]
Scenario-based
specifications such as these are much more amenable to our decomposition-based
approach, since there is a natural separation of input and output
components of the specification.  In addition, with such
specifications, it is meaningful to analyze the structure of
dependence between the input and output components, and exploit
structural properties (viz. the size of the MIS in the conflict graph
as explained in Section~\ref{sec:amicable}) in synthesis.  We believe
that as we look beyond PCNF representations of specifications,
techniques like those presented in this paper will be even more
useful in a portfolio approach to synthesis.

In our experimental evaluation, we chose CADET as a representative of the state-of-the-art on Boolean synthesis stemming from the QBF community. This is due to its focus on 2QBF (which suffices for Boolean synthesis of realizable specifications) and its performance on recent QBFEVAL competitions. Another certifying QBF solver, CAQE~\cite{RT2015}, uses techniques that are similar to the clause splitting used in our algorithm. But CAQE targets QBF instances with arbitrary quantifier alternation, requiring additional mechanisms for handling these cases, and furthermore does not perform the same analysis as here, based on MFS and MSS. Due to their similarities, it would be interesting to perform a comparison between the two algorithms in the future.

Finally, the techniques presented in this work are clearly not the only ways to achieve synthesis via decomposition, and there exists scope for significant innovation and creativity, both in the manner in which a specification is decomposed, and in the way the decomposition is exploited to arrive at an efficient synthesis algorithm. One example lies in identifying algorithms for sequential decomposition, as presented in~\cite{FLOV18}, which are applicable to a synthesis context. In summary, synthesis based on input-output decomposition presents uncharted territory that deserves systematic exploration in order to complement the strengths of existing synthesis tools.

\section*{Acknowledgment}

We thank Assaf Marron for useful discussions, and the anonymous reviewers for their suggestions.

% trigger a \newpage just before the given reference
% number - used to balance the columns on the last page
% adjust value as needed - may need to be readjusted if
% the document is modified later
%\IEEEtriggeratref{8}
% The "triggered" command can be changed if desired:
%\IEEEtriggercmd{\enlargethispage{-5in}}

% references section

% can use a bibliography generated by BibTeX as a .bbl file
% BibTeX documentation can be easily obtained at:
% http://mirror.ctan.org/biblio/bibtex/contrib/doc/
% The IEEEtran BibTeX style support page is at:
% http://www.michaelshell.org/tex/ieeetran/bibtex/
%\bibliographystyle{IEEEtran}
% argument is your BibTeX string definitions and bibliography database(s)
%\bibliography{IEEEabrv,../bib/paper}
%
% <OR> manually copy in the resultant .bbl file
% set second argument of \begin to the number of references
% (used to reserve space for the reference number labels box)
\bibliographystyle{abbrv}
\bibliography{BDDSeqDec}

\clearpage

\section*{Appendix}

\subsection{On Synthesis via Sequential Decomposition}

As a first attempt for synthesis via decomposition, we explored the use of a decomposition method called \emph{ sequential decomposition}, described in \cite{FLOV18},  for the purpose of synthesis. In sequential decomposition the specification $F$ is split into 
input and output parts by adding intermediate fresh variables $\vz = (z_1, \ldots, z_k)$ 
defining a domain $Z$
that serves to communicate between the input domain $X$ and the output domain $Y$. This intermediate domain $Z$ should be introduced in such a way as to preserve exactly every input/output pair in $F$.  In addition, to preserve the independence of the two parts, as described in  \cite{FLOV18},  we would like each part to  be synthesized independently and then recomposed into an implementation for the entire specification.  Therefore we define the following.
    
	\begin{definition} \label{def:decomposition}
		Let $F(\vx,\vy), F_1(\vx,\vz),F_2(\vz,\vy)$ be Boolean formulas. Then $(F_1,F_2)$ is called a \emph{good decomposition} of $F$ if
        \begin{inparaenum}
        \item $F(\vx,\vy)\equiv \exists \vz . (F_1(\vx,\vz)\land F_2(\vz,\vy))$; and \label{item:decomp}
        \item for every input $\hx \in Dom(F)$, $Img_{\hx}(F_1) \subseteq Dom(F_2)$. \label{item:recomp}
        \end{inparaenum}
	\end{definition}
    
    Property~(\ref{item:decomp}) guarantees that for every input assignment $\hx$ and output assignment $\hy$, $(\hx, \hy)$ satisfies $F$ if and only if there exists an intermediate assignment $\hz$ such that $(\hx, \hz)$ satisfies $F_1$ and $(\hz, \hy)$ satisfies $F_2$. Property~(\ref{item:recomp}) guarantees that for all implementations $g_1$ of $F_1$ and $g_2$ of $F_2$, their composition $g_2 \circ g_1$ is well-defined and is an implementation of $F$. Such a decomposition attains a complete separation of the inputs and outputs of $F$, in the sense that no direct knowledge of the output variables is necessary to synthesize $F_1$, nor of the input variables to synthesize $F_2$.

       We  now state the following theorem describing synthesis by sequential decomposition. %Proof appears in the appendix.
        
		\begin{theorem}\label{thm:StrongDec}
			Let $F(\vec{x}, \vec{y})$ be a specification where $\vec{x}$ are the input variables and $\vec{y}$ are the output variables. If $F_1(\vec{x}, \vec{z})$ and $F_2(\vec{z}, \vec{y})$ form a good decomposition of $F$, then for every implementation $g_1$ of $F_1$ and $g_2$ of $F_2$, $g_2 \circ g_1$ implements $F$.
		\end{theorem}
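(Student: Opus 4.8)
The plan is a straightforward definition chase through the two clauses of Definition~\ref{def:decomposition}; the only genuine content is checking that $g_2 \circ g_1$ is well-defined on $Dom(F)$. I would fix an arbitrary $\hx \in Dom(F)$ and follow it through $g_1$ and then $g_2$.

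First I would observe that $Dom(F) \subseteq Dom(F_1)$: since $\hx \in Dom(F)$, there is some $\hy$ with $(\hx,\hy) \in F$, and property~(\ref{item:decomp}) supplies a witness $\hz$ with $(\hx,\hz) \in F_1$, so $\hx \in Dom(F_1)$ and $g_1(\hx)$ is defined. Setting $\hz := g_1(\hx)$, the fact that $g_1$ implements $F_1$ gives $(\hx,\hz) \in F_1$, i.e.\ $\hz \in Img_{\hx}(F_1)$. Now I would invoke property~(\ref{item:recomp}), which applies precisely because $\hx \in Dom(F)$, to conclude $\hz \in Dom(F_2)$; hence $g_2(\hz)$ is defined and the composite is meaningful at $\hx$. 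Setting $\hy := g_2(\hz) = (g_2 \circ g_1)(\hx)$, the fact that $g_2$ implements $F_2$ gives $(\hz,\hy) \in F_2$. Finally I would close the loop: from $(\hx,\hz) \in F_1$ and $(\hz,\hy) \in F_2$ we get that $\exists \vz.(F_1(\hx,\vz) \land F_2(\vz,\hy))$ holds, with $\hz$ as the witness, so property~(\ref{item:decomp}) yields $(\hx,\hy) \in F$, that is, $(\hx, (g_2 \circ g_1)(\hx)) \in F$. Since $\hx \in Dom(F)$ was arbitrary, $g_2 \circ g_1$ implements $F$.

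I do not expect any real obstacle here, as the argument is purely a chase through the definitions. The one subtlety to get right is ordering: property~(\ref{item:recomp}) must be applied only after noting $\hx \in Dom(F)$ (membership in $Dom(F_1)$ alone is not enough), and the well-definedness of $g_2(g_1(\hx))$ must be secured before asserting that this value together with $\hx$ satisfies $F$. As a remark, if $F$ is additionally realizable then $Dom(F) = X$ and the implementation $g_2 \circ g_1$ is in fact total, but this is not needed for the statement.
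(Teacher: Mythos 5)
Your proposal is correct and follows essentially the same definition-chasing argument as the paper's proof: establish $\hx \in Dom(F_1)$ from property~(1), apply $g_1$, use property~(2) to place $g_1(\hx)$ in $Dom(F_2)$, apply $g_2$, and close with property~(1) again. The only difference is that you spell out the justification for $Dom(F) \subseteq Dom(F_1)$ explicitly, which the paper leaves implicit.
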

        
       \begin{proof}
			Since $F(\vx,\vy)\equiv \exists \vz . (F_1(\vx,\vz)\land F_2(\vz,\vy))$, we have that if $(\hx,\hz)$ satisfies $F_1$ and $(\hz,\hy)$ satisfies $F_2$ then $(\hx,\hy)$ satisfies $F$.
			Let $g_1:X \to Z$ and $g_2:Z \to Y$ be implementations of $F_1$ and $F_2$, respectively.
			Let $\hx \in Dom(F)$. Since $(F_1, F_2)$ is a good decomposition of $F$, $\hx \in Dom(F_1)$.
            Since $g_1$ is an implementation of $F_1$, $(\hx,g_1(\hx))$ satisfies $F_1$. Furthermore, since $\hx \in Dom(F)$, $Img_{\hx}(F_1) \subseteq Dom(F_2)$. Then, since $g_1(\hx) \in Img_\hx(F_1)$, $g_1(\hx)\in Dom(F_2)$. Therefore, $(g_1(\hx),g_2(g_1(\hx))$ satisfies $F_2$. Since $(\hx,g_1(\hx))$ satisfies $F_1$ and $(g_1(\hx),g_2(g_1(\hx))$ satisfies $F_2$, then $(\hx, g_2(g_1(\hx)))$ satisfies $F$. Therefore, $g_2 \circ g_1$ is an implementation of $F$.
		\end{proof}
			
    Theorem~\ref{thm:StrongDec} describes a clean condition for a decomposition that allows synthesis of each component independently. 
    %The quality of synthesis by this  approach in its current form, however, seems a bit questionable: even finding a sequential decomposition for a given set of $\vz$ variables is NEXP-time hard~\cite{FLOV18}. Even if we do, however, fix on a specific set of $\vz$ variables, 
    %the information relayed between components by only by these variable and thus might discard useful information that would improve performance of the synthesis procedure, see Appendix for elaborated discussion.
   % Finding a good decomposition where $|\vz|\geq \min\{|\vx|,|\vy|\}$ is trivial. Indeed, assume w.l.o.g. $|\vz| \geq |\vy|$. Then, define $F_1(\vx, \vz) = F(\vx, \vz)$ (that is, syntactically change every $y_i$ in $F$ to $z_i$) and $F_2(\vz, \vy) = \bigwedge^{|\vy|}_{i=1} (z_i \leftrightarrow y_i)$. Then, trivially $(F_1,F_2)$ is a good decomposition of $F$. For application in synthesis, however, we are interested in obtaining a decomposition for which synthesizing $F_1$ and $F_2$ separately is easier than synthesizing $F$. One option could be to find a decomposition where $|\vz| < \min\{|\vx|, |\vy|\}$. But finding, in general, whether a decomposition exists given $F(\vx,\vy)$ and $\vz$ is intractable, as seen in \cite{FLOV18}, where this problem is explored in depth. 
We %elaborate the discussion in Section \ref{sec:theor} by providing 
now provide a concrete example that follows %the "synthesis via sequential decomposition" framework.
this framework.
Specifically we introduce a type of decomposition that satisfies the requirements of a good decomposition according to Definition~\ref{def:decomposition}. This decomposition can be applied to all specifications $F(\vx, \vy)$ in CNF, and is based on the same concepts used in the input and output analyses in Section~\ref{sec:amicable}.

The \emph{CNF decomposition} of $F = C_1 \land \ldots \land C_k$ is  a pair $(F_1,F_2)$ where
		\begin{align*}
		&F_1(\vec{x}, \vec{z}) = \bigwedge^k_{i=1} (\neg C_i|_{\vec{x}} \leftrightarrow z_i) \\
		&F_2(\vec{z}, \vec{y}) = \bigwedge^k_{i=1} (\neg z_i \lor C_i|_{\vec{y}}) \equiv \bigwedge^k_{i=1} (z_i \rightarrow C_i|_{\vec{y}})
		\end{align*}
        The idea behind the CNF decomposition is to focus on the clauses that can be true/false according to the assignment for the input component. This leads to a natural and very simple decomposition: in fact,  $F_1$ is already a function from $\vx$ to $\vz$ on its own, and hence the synthesis of $g_1$ is trivial - just assign every $z_i$ to $\neg C_i|_\vx$. Intuitively, this decomposition works by grouping assignments of the input variables into individual $z$ variables. Specifically, note that $z_i$ is only assigned to true if that is absolutely necessary, that is, when  $C_i|_{\vec{x}}$ is not satisfied and we must satisfy $C_i|_{\vec{y}}$ instead. As such, we abstract away  all the assignments that make the same $z_i$ variables true. Therefore, we only need to concern ourselves with synthesizing $g_2$ from $F_2$. %In Appendix \ref{} we discuss the CNF decomposition in the sense of the complete separation networks.
  
  We now prove that this decomposition meets the criteria for a good decomposition according to Theorem~\ref{thm:StrongDec}.

	\begin{theorem}
		If $F_1$ and $F_2$ are given by the CNF decomposition of a CNF formula $F$, then $(F_1,F_2)$ is a good decomposition of $F$.
	\end{theorem}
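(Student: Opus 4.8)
The plan is to verify the two defining properties of a good decomposition (Definition~\ref{def:decomposition}) directly for the CNF decomposition. For property~(\ref{item:decomp}), I would show $F(\vx,\vy)\equiv \exists \vz.(F_1(\vx,\vz)\land F_2(\vz,\vy))$ by exploiting the fact that $F_1$ is functional in $\vz$: given any $\hx$, the constraint $F_1(\hx,\vz)$ forces the unique assignment $\hz$ with $z_i = \neg C_i|_\vx(\hx)$. So the existential over $\vz$ collapses to substituting this $\hz$ into $F_2$. Under this substitution, $F_2(\hz,\vy)$ becomes $\bigwedge_{i=1}^k(z_i \to C_i|_\vy)$ where $z_i$ is true exactly when $C_i|_\vx(\hx)=0$; that is, $F_2(\hz,\vy) \equiv \bigwedge_{i : C_i|_\vx(\hx)=0} C_i|_\vy(\vy)$. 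I would then argue this is logically equivalent to $F(\hx,\vy)$: recall $C_i \equiv (C_i|_\vx \lor C_i|_\vy)$, so $C_i(\hx,\vy)$ is already satisfied when $C_i|_\vx(\hx)=1$ and reduces to $C_i|_\vy(\vy)$ otherwise. Conjoining over all $i$ gives exactly the formula above. Since this holds for every $\hx$, the two formulas are logically equivalent.

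For property~(\ref{item:recomp}), I need $Img_\hx(F_1) \subseteq Dom(F_2)$ for every $\hx \in Dom(F)$. Since $F_1$ is functional, $Img_\hx(F_1)$ is the singleton $\{\hz\}$ with $\hz$ as above, so it suffices to show $\hz \in Dom(F_2)$, i.e., there exists $\hy$ with $F_2(\hz,\hy)=1$. Because $\hx \in Dom(F)$, there is some $\hy$ with $F(\hx,\hy)=1$; by property~(\ref{item:decomp}) (or directly by the computation above), this same $\hy$ satisfies $F_2(\hz,\hy)$. Hence $\hz \in Dom(F_2)$, establishing the property.

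The main obstacle — really the only place requiring care rather than bookkeeping — is the equivalence in property~(\ref{item:decomp}): one must handle the existential quantifier cleanly by observing that $F_1$ pins down $\vz$ uniquely as a function of $\vx$, and then track the case split on whether $C_i|_\vx(\hx)$ is true. Everything else follows from the clause identity $C_i \equiv C_i|_\vx \lor C_i|_\vy$ and the definitions. Once property~(\ref{item:decomp}) is in hand, property~(\ref{item:recomp}) is almost immediate from realizability of the particular input $\hx$. I would also note explicitly that Theorem~\ref{thm:StrongDec} then applies, so any implementations $g_1,g_2$ of $F_1,F_2$ compose to an implementation of $F$ — and here $g_1$ is the trivial assignment $z_i := \neg C_i|_\vx$, reducing the synthesis task entirely to $F_2$.
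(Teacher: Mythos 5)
Your proposal is correct and follows essentially the same route as the paper: both parts hinge on the fact that $F_1$ is functional in $\vz$ (so the existential collapses to the unique $\hz$ with $z_i = \neg C_i|_\vx(\hx)$) together with the clause identity $C_i \equiv C_i|_\vx \lor C_i|_\vy$, and your argument for property~(2) is the paper's argument verbatim. The only cosmetic difference is that the paper carries out property~(1) as a syntactic chain of equivalences (distributing $\exists\vz$ over the conjuncts and eliminating each $z_i$) whereas you argue semantically for each fixed $\hx$; both are sound.
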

    
% 	Therefore from Theorem \ref{thm:StrongDec} we have the following corollary.
	
% \begin{corollary}
% If $F_1$ and $F_2$ are given by the
% CNF decomposition 
% of a CNF formula $F$, and $g_1$ and $g_2$ are implementations of $F_1$ and $F_2$, respectively, then $g_2 \circ g_1$ is an implementation of $F$.
% 	\end{corollary}

% \begin{claim}
% If $F_1$ and $F_2$ are given by the sequential CNF decomposition of a CNF formula $F$, then $(F_1,F_2)$ is a good decomposition of $F$ and $Img_{\hx}(F_1) \subseteq Dom(F_2)$ for every $F$-realizable input $\hx$.
% \end{claim}

\begin{proof}
We first prove that $F(\vx, \vy) \equiv \exists \vz . F_1(\vx, \vz) \land F_2(\vz, \vy)$.

\begin{align*}
\exists \vz . F_1(\vx, \vz) \land F_2(\vz, \vy)
	&\equiv \exists \vz . \bigwedge^k_{i=1} (\neg C_i|_\vx \leftrightarrow z_i) \land \bigwedge^k_{i=1} (z_i \rightarrow C_i|_\vy) \\
    &\equiv \bigwedge^k_{i=1} (\exists z_i . (\neg C_i|_\vx \leftrightarrow z_i) \land (z_i \rightarrow C_i|_\vy)) \\
    &\equiv \bigwedge^k_{i=1} (\neg C_i|_\vx \rightarrow C_i|_\vy) \\
    &\equiv \bigwedge^k_{i=1} (C_i|_\vx \lor C_i|_\vy) \\
    &\equiv \bigwedge^k_{i=1} C_i \\
    &\equiv F(\vx, \vy)
\end{align*}

Next, we prove that $Img_\hx(F_1) \subseteq Dom(F_2)$ for every input $\hx \in Dom(F)$. Assume $\hx \in Dom(F)$, that is, there exists $\hy$ such that $F(\hx, \hy) = 1$. Since $F_1(\vx, \vz) = (\neg C_1|_\vx \leftrightarrow z_1) \land \ldots \land (\neg C_k|_\vx \leftrightarrow z_k)$, there is a unique $\hz$ that satisfies $F_1$ for $\hx$. This $\hz$ is the only element of $Img_\hx(F_1)$, therefore we only need to prove that $\hz \in Dom(F_2)$. But by construction of $F_2$, the same $\hy$ that satisfies $F$ for $\hx$ also satisfies $F_2$ for $\hz$. Therefore, $\hz \in Dom(F_2)$.
\end{proof}

%\subsection{Experiments on Synthesis via CNF Decomposition}

In order to evaluate the possibility of using CNF decomposition for synthesis, we compared synthesis performance of the decomposed formula with the performance for the original formula. Since the decomposition itself and synthesis of $F_1$ are straightforward and can be performed in linear time, we focused on comparing the synthesis of $F_2$ with the direct synthesis of $F$.

One important detail is that, even if $F$ is realizable, the $F_2$ produced by CNF decomposition might not be. As a consequence, we have to use a synthesis tool that is able to handle unrealizable specifications, such as RSynth. We ran the experiments using the same experimental setting as Section~\ref{sec:exp}.

\begin{figure}[!t]
\centering
\includegraphics[width=\columnwidth]{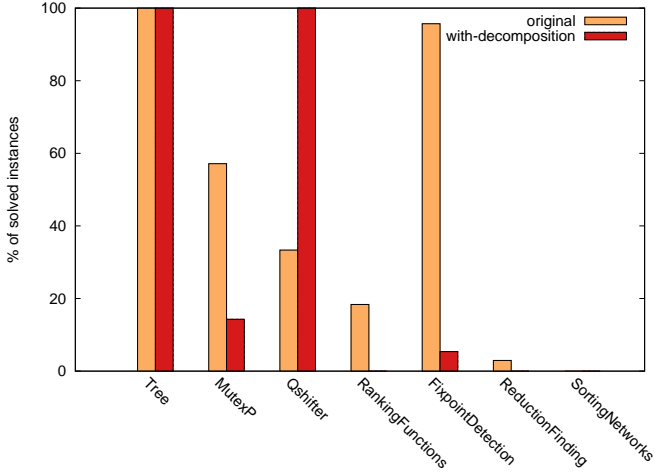}
\caption{Percentage of instances solved by the RSynth synthesis tool in the time limit when synthesizing the original specification vs. the decomposed version.}
\label{fig:decomp}
\end{figure}

As can be seen from Figure~\ref{fig:decomp}, in most cases the performance of RSynth worsened when using CNF decomposition. For most families, when applied to the decomposed version the tool was able to solve only a much smaller subset of the benchmarks. Furthermore, instances that could be solved took significantly longer in general, sometimes up to 10,000 times more. The only exception occurred in the \textsc{Qshifter} family, in which decomposition scaled significantly better and allowed solving all the instances.

The conclusion that we draw from these experiments is that CNF decomposition is not amenable to synthesis, and in many cases can worsen synthesis performance considerably. Part of the reason is likely %due to the issues brought up at the end of Section~\ref{sec:theor}.
because the synthesis of $F_2$ has no access to information about $F_1$, possibly performing superfluous work to produce a correct function for assignments of $\vz$ that are never produced by the input component. This insight lead us to pursue the approach presented in Section~\ref{sec:amicable}, that relies on sharing information between the two separated components.
It might still be possible to benefit from sequential decomposition for synthesis, but better decomposition strategies are clearly necessary.

%In its current form, however, it is not clear how to take advantage of this approach for synthesis in practice. Even finding a sequential decomposition for a formula $F$ given a set of $\vz$ variables is NEXPTIME-hard~\cite{FLOV18}.  More than that, even if we can compute, given $F$, a value of $\vz$ for which decomposition is easy to perform, there is no guarantee that the resulting decomposition will be easier to synthesize than $F$ itself. One issue is that information is only relayed between the two components by the $\vz$ variables, and if those are chosen poorly they might discard useful information that could improve performance of the synthesis procedure. See the appendix for a more elaborate discussion.

%Therefore to be able to use the idea of sequential decomposition and overcome the problem of loss of information, we propose a more practical approach towards decomposition that relies on sharing information between the two separated components.

\subsection{Construction of the Decision List from List of MFS/MSS}

In Section~\ref{sec:amicable} we implement a specification $F(\vx, \vy)$ in CNF by a decision list constructed from either a list of MFS of the input clauses or a list of MSS of the output clauses. Now we show how to perform this construction in detail.

Let $F(\vx, \vy)$ be a CNF formula, and $S_\vx$ and $S_\vy$ be, respectively, the set of input and output clauses of $F$.
Let $L_\vx$ be a list of MFS of $S_\vx$ and $L_\vy$ be the list of MSS of $S_\vy$. 

We first show how we can use $L_\vx$ to construct a decision list of the form \texttt{if $f_1(\vx)$ then $\hy_1$ else if $f_2(\vx)$ then $\hy_2$ else \ldots else $\hy_{k_\vx}$}, where $k_\vx$ is the length of $L_\vx$. Moreover, the size of each $f_i(\vx)$ is linear to the specification $F$.

For every $1 \leq i \leq k_\vx$ let $M^i_\vx$ be the $i$-th MFS. We define
$$f_i(\vx) = \bigwedge_{C|_\vx \in S_\vx \setminus M^i_\vx} C|_\vx$$
and $\hy_i$ to be a satisfying assignment to $\Co(M^i_\vx)$. Note that $f_i$ is satisfied exactly by those input assignments $\hx$ that satisfy every clause $C$ such that $C|_\vx \not\in M^i_\vx$, which means that $Fals(\hx) \subseteq M^i_\vx$. Meanwhile, $\hy_i$ satisfies every clause $C$ such that $C|_\vx \in M^i_\vx$. As a consequence, if $f_i(\hx) = 1$ then $(\hx, \hy_i)$ satisfies every clause in the CNF, and therefore satisfy $F$.

Next we show how we can use $L_\vy$ to construct a similar decision list of size $k_\vy$  --  the length of $L_\vy$.
For every $1 \leq i \leq k_\vy$ let $M^i_\vy$ be the $i$-th MSS. We define
$$f_i(\vx) = \bigwedge_{C|_\vy \in S_\vy \setminus M^i_\vy} C|_\vx$$
and $\hy_i$ to be a satisfying assignment to $M^i_\vx$. Note that $f_i$ is satisfied exactly by those input assignments $\hx$ that satisfy every clause $C$ such that $C|_\vy \not\in M^i_\vy$, which means $MustSat(\hx) \subseteq M^i_\vy$. Meanwhile, $\hy_i$ satisfies every clause $C$ such that $C|_\vy \in M^i_\vy$. As a consequence, if $f_i(\hx) = 1$ then $(\hx, \hy_i)$ satisfies every clause in the CNF, and therefore satisfy $F$.

Therefore, as long as $L_\vx$ or $L_\vy$ contain enough MFS or MSS respectively to cover every possible input assignment $\hx$, something that is guaranteed by the algorithms in Section~\ref{sec:amicable}, the generated decision list is a correct implementation of $F(\vx, \vy)$.

\subsection{NP-completeness of Constrained MFS Generation}

An essential component of the Back-and-Forth algorithm in Section~\ref{sec:back_and_forth} is the generation of an MFS that is not covered by the previous MSSs. We can formulate this problem in the following decision problem called \textsc{Hitting-MFS-Among-MSS}, due to its resemblance to the Hitting-set problem. We say that and MFS $M$ is \textit{covered by a list} $L$ of MSS if there exists an MSS $M'\in L$, such that $M$ is covered (in the sense of Section \ref{sec:amicable}) by $M'$.

\begin{problem} \label{prob:cmfsgen}
Let $F(\vx, \vy)$ be a CNF formula where $S_\vx$ is the set of input clauses and $S_\vy$ is the set of output clauses. Let $L=(M^1_\vy, \ldots, M^k_\vy)$, where each $M^i_\vy\subseteq S_\vy$, be a collection of MSSs of the output clauses $S_\vy$. Does there exist an MFS $M_\vx \subseteq S_\vx$ of the input clauses such that $M$ is not covered by $L$?
\end{problem}

We next show that \textsc{Hitting-MFS-Among-MSS} is NP-complete. Thus our choice of using a SAT solver in the Back-and-Forth algorithm to find the MFS is natural.

\begin{theorem}
\textsc{Hitting-MFS-Among-MSS} is NP-complete.
\end{theorem}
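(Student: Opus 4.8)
The plan is to establish membership in NP first and then hardness by reduction from a known NP-complete problem. For membership, the witness is an MFS $M_\vx \subseteq S_\vx$ itself: given a candidate subset $M$ of the input clauses, I can verify in polynomial time that (a) $M$ is all-falsifiable, which via the conflict-graph characterization of Section~\ref{sec:amicable} amounts to checking that no two clauses in $M$ share a complementary pair of literals (equivalently, $M$ is an independent set in the conflict graph, and a single falsifying assignment can be read off directly); (b) $M$ is maximal, by checking for each clause $C|_\vx \in S_\vx \setminus M$ that it conflicts with some clause in $M$; and (c) $M$ is not covered by any $M^i_\vy \in L$, i.e.\ for each $i$ there is some $C_j|_\vx \in M$ with $C_j|_\vy \notin M^i_\vy$. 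All three checks are clearly polynomial in the size of $F$ and $L$.

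For NP-hardness, I would reduce from a standard problem that naturally produces the "maximal independent set avoiding a forbidden pattern" flavor of the question — the SAT query $\varphi$ displayed in Section~\ref{sec:back_and_forth} already suggests that the constraint is essentially a 2-SAT-like conflict constraint (the edge clauses $(\neg z_i \lor \neg z_j)$) conjoined with the positive "hitting" clauses $\bigvee_{C_j|_\vy \in S_\vy \setminus M_i} z_j$, and the maximality requirement is what pushes it beyond polynomial. A clean route is to reduce from \textsc{SAT} (or \textsc{3-SAT}) or from \textsc{Independent Set}: encode each propositional variable and its negation as input clauses that conflict with each other (forcing the MFS to "choose" a literal), build input clauses whose conflict structure mimics the clause structure of the source formula, and engineer the list $L$ of MSSs so that the covering condition forbids exactly the "unsatisfying" assignments. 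Concretely, for each clause of the source formula one can add one MSS to $L$ that covers precisely those MFSs failing to select a satisfying literal for that clause, so that an MFS avoids being covered by all of $L$ iff the corresponding assignment satisfies every source clause. One must also pad $S_\vy$ and the correspondence $\Co(\cdot)$ with fresh output variables so that each designed subset is genuinely a \emph{maximal} satisfiable subset of $S_\vy$, and verify that maximality of the MFS on the input side does not spuriously rule out or permit the intended choices.

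The main obstacle I anticipate is the interplay between the two maximality constraints. The \emph{output}-side MSSs in $L$ are given as part of the input, so I must construct a formula $F$ in which the sets I place in $L$ are actually MSSs of $S_\vy$ (not just satisfiable subsets) — this typically requires introducing dedicated output variables per source clause so that each $M^i_\vy$ is maximal by construction. On the \emph{input} side, the MFS must be maximal, which means the reduction cannot simply let the MFS be an arbitrary independent set encoding an assignment; I have to ensure that the "gadget" input clauses (one conflicting pair per source variable, plus whatever clauses implement the clause structure) have a conflict graph whose \emph{maximal} independent sets are exactly the encodings of complete truth assignments to the source variables, with no extra maximal independent sets sneaking in and no assignment-encoding independent set failing to be maximal. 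Getting both maximality conditions to line up simultaneously — so that "there exists an uncovered MFS" is equivalent to "the source instance is satisfiable" — is where the careful gadget design lives; once the gadgets are right, the correctness argument and the polynomial-time bound on the reduction are routine.
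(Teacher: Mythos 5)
Your NP-membership argument is complete and correct---indeed more detailed than the paper's, which merely asserts polynomial-time verifiability---and reduction from 3SAT is the right hardness strategy. But the hardness half of your proposal is a plan, not a proof: no gadget is actually constructed, and you yourself defer the construction as ``where the careful gadget design lives.'' That construction is the entire technical content of the theorem, so this is a genuine gap. Worse, the particular gadget you gesture at (one conflicting pair of input clauses per source variable, plus further input clauses ``mimicking the clause structure'') runs headlong into the difficulty you anticipate: once extra clauses are added to implement the clause structure, there is no argument that the maximal independent sets of the resulting conflict graph are exactly the encodings of complete assignments, and you offer no way to make the two maximality conditions (maximality of the MFS and maximality of the sets placed in $L$) line up.

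The paper's construction sidesteps this by encoding differently and by \emph{not} requiring a bijection between MFSs and assignments. For each 3CNF clause $D_i$ it creates one input clause $C_\tau$ per satisfying partial assignment $\tau$ of $D_i$ (at most seven), with $C_\tau$ falsified exactly by $\tau$, so two input clauses conflict iff their partial assignments are inconsistent. On the output side it uses one fresh variable $y_i$ per clause, output clauses $A_i = (y_i)$ and $B_i = (\bigvee_{j \neq i} \neg y_j)$, attaches both $A_i$ and $B_i$ to every $C_\tau$ coming from $D_i$, and sets $M^i_\vy$ to consist of all $A_\ell, B_\ell$ with $\ell \neq i$; a short argument shows each $M^i_\vy$ is genuinely an MSS. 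An MFS then escapes coverage by $M^i_\vy$ precisely when it contains some input clause originating from $D_i$. Crucially, the forward direction only needs that $Fals(\hx)$ is maximal when $\hx$ satisfies $\varphi$ (a separate lemma), and the backward direction only uses that the given MFS is all-falsifiable and uncovered---no claim that \emph{every} MFS encodes an assignment is needed, which is exactly the claim your sketch would be forced to prove. To complete your proof you must either supply gadgets with the bijection property you describe and verify both maximality conditions, or weaken your plan along the paper's lines.
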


\begin{proof}
First note that \textsc{Hitting-MFS-Among-MSS} is clearly in \textsc{NP}, since given $M_\vx \subseteq S_\vx$, we can verify in polynomial time whether $M_\vx$ is an MFS and whether $M_\vx$ is covered by one of the $M^i_\vy$. We next prove that \textsc{Hitting-MFS-Among-MSS} is \textsc{NP}-hard by a reduction from 3SAT.

Let $\varphi(\vx) = D_1 \land \ldots \land D_k$ be a 3CNF formula where every $D_i$ is a clause. We  construct a CNF formula $F(\vx, \vy)$ and MSSs $L=(M^1_\vy, \ldots, M^k_\vy)$ such that there exists an MFS that is not covered by $L$  iff $\varphi$ is satisfiable.

We first construct the set of input clauses $S_\vx$ in the following way. For every clause $D_i$, let $SatAssign(D_i)$ be the set of all partial assignments to the variables of $D_i$ that satisfy $D_i$. Since $D_i$ has at most three literals, $|SatAssign(D_i)|$ is at most $7$. 
Next we construct a fresh input clause from every such assignment. Note that for every assignment $\tau$ we can construct a clause $C_\tau$ over the assigned variables that is falsified exactly by $\tau$.
%We say that $C_\tau$ is \textit{falsified by} $\tau$.
For example if $\tau=(x_1=true,x_2=false,x_3=true)$ then the clause $C_\tau=(\neg x_1 \vee x_2 \vee \neg x_3)$ is falsified by $\tau$. Let $S^i_\vx = \{C_\tau \mid \tau \in SatAssign(D_i)\}$ be the set of input clauses falsified by the satisfying assignments of clause $D_i$. Let $S_\vx = \bigcup^k_{i=1} S^i_\vx$.

\begin{lemma} \label{lemma:mfsconstruct}
If $\hx$ is a satisfying assignment of $\varphi$, then the set $M_\vx \subseteq S_\vx$ of input clauses falsified by $\hx$ is an MFS.
\end{lemma}

\begin{proof}
Let $M_\vx$ be the set of input clauses falsified by $\hx$. By definition, $M_\vx$ is falsifiable, since it is falsifiable by $\hx$. We now prove that $M_\vx$ is maximal. For every clause $D_i$ there is a unique partial assignment $\tau_i$ over the variables in $D_i$ that is consistent with $\hx$. Since $C_{\tau_i} \in S_\vx$ and $\tau_i$ is consistent with $\hx$, $\hx$ falsifies $C_{\tau_i}$, and therefore $C_{\tau_i} \in M_\vx$. Then, as $M_\vx$ already has one such $\tau_i$ for every clause $D_i$, and all other satisfying assignments of $D_i$ conflict with $\tau_i$, $M_\vx$ is maximal.
\end{proof}

We next construct the set of output clauses $S_\vy$ in the following way. We first introduce one $y_i$ variable for every clause $D_i$. We then construct for every $D_i$ two clauses: $A_i = (y_i)$ and $B_i = (\bigvee_{j \neq i} \neg y_j)$. Let $M^i_\vy = \bigcup_{\ell \neq i} \{A_\ell, B_\ell\}$ for every $i\leq k$ and let $L=(M^1_\vy,\cdots,M^k_\vy)$.  Let $S_\vy = \bigcup^k_{i=1} \{A_i, B_i\}$.

\begin{lemma} \label{lemma:mssconstruct}
For $1 \leq i \leq k$, $M^i_\vy$ is an MSS of $S_\vy$.
\end{lemma}

\begin{proof}
First, we prove that $M^i_\vy$ is satisfiable. Note that $M^i_\vy$ has two types of clauses: $A_\ell = y_\ell$, for $\ell \neq i$, and $B_\ell = \bigvee_{j \neq \ell} \neg y_j$, for $\ell \neq i$. Therefore all $A_\ell$ can be satisfied by setting all $y_\ell$ for $\ell \neq i$ to true and all  $B_\ell$ can be satisfied by setting $y_i$ to false. Therefore, $M^i_\vy$ is satisfiable by setting $y_i$ to false and all other output variables to true.

To prove that $M^i_\vy$ is maximal, note that the only two output clauses that are missing from $M^i_\vy$ are $A_i = y_i$ and $B_i = \bigvee_{j \neq i} \neg y_j$. Note that for the $A_\ell$ clauses to be satisfied, all output variables other than $y_i$ must be set to true. Therefore, we cannot add $B_i$, since it would require one of these variables to be set to false. At the same time, for the $B_\ell$ clauses to be satisfied, since all other output variables are set to true, then $y_i$ must be set to false. Therefore, we cannot add $A_i$ without making the set unsatisfiable. As a consequence, $M^i_\vy$ is maximal.
\end{proof}

Lastly, we construct $F(\vx, \vy)$ by concatenating the input and output clauses in the following way. For every clause $D_i$ and assignment $\tau \in SatAssign(D_i)$, create two clauses $(C_\tau \lor A_i)$ and $(C_\tau \lor B_i)$.

We conclude with the following lemma.

\begin{lemma} \label{lemma:reduction}
There exists an MFS $M_\vx \subseteq S_\vx$ of the input clauses that is not covered by $L$ iff $\varphi$ is satisfiable.
\end{lemma}

\begin{proof}
($\leftarrow$) Assume $\varphi$ is satisfiable. Let $\hx$ be a satisfying assignment. Then, for every clause $D_i$ there must be a unique partial assignment $\tau_i$ consistent with $\hx$ that satisfies $D_i$. Then, the set $M_\vx = \bigcup^k_{i=1} \{C_{\tau_i}\}$ is the MFS defined in Lemma~\ref{lemma:mfsconstruct}. Now, recall that by the construction of $F$, that there must be a clause $(C_{\tau_i} \lor A_i)$ in $F$ for every 3CNF clause $D_i$. Therefore, $A_i \in \Co(M_\vx)$, for every $i$. Since $A_i \not\in M^i_\vy$, $M_\vx$ is not covered by $M^i_\vy$ for every $i$. Therefore, $M_\vx$ is an MFS that is not covered by $L$.

($\rightarrow$) Assume that there exists an MFS $M_\vx \subseteq S_\vx$ of the input clauses that is not covered by $L$. Since $M_\vx$ is an MFS, there must be an assignment $\hx$ that falsifies every clause $C_\tau \in M_\vx$. Every such $\tau$ must be consistent with $\hx$. For every $i$, since $M_\vx$ is not covered by $M^i_\vy$, $M_\vx$ must include an input clause $C_{\tau_i}$ originating from clause $D_i$. Since $\tau_i$ is consistent with $\hx$ and a satisfying assignment to $D_i$, $\hx$ must satisfy $D_i$. Therefore, $\hx$ satisfies every clause in $\varphi$, and therefore $\varphi$ is satisfiable.
\end{proof}

Therefore, our construction is a reduction from 3SAT to Problem~\ref{prob:cmfsgen}. Since the construction is polynomial, this proves that \textsc{Hitting-MFS-Among-MSS} is NP-complete as required.

\end{proof}

\subsection{Optimality of the Back-and-Forth Algorithm}

\begin{theorem}
Let $F(\vx, \vy)$ be a realizable specification. Let $k$ be the size of the smallest decision list that implements $F$. Then, there is an execution of the Back-and-Forth algorithm that produces a decision list of size $k$.
\end{theorem}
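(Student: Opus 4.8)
The plan is to reduce the theorem to a clean combinatorial identity: the length $k$ of the smallest decision list implementing $F$ equals the least number of MSSs of $S_\vy$ needed to \emph{cover} all MFSs of $S_\vx$ (``cover'' as in Section~\ref{sec:back_and_forth}). Call a set $\mathcal{N}$ of MSSs of $S_\vy$ an \emph{MSS-cover} if every MFS of $S_\vx$ is covered by some element of $\mathcal{N}$, and let $c^\ast$ be the least size of an MSS-cover (finite, by Lemma~\ref{lemma:mfs_mss}, since $F$ is realizable). I would prove (a) $c^\ast \le k$; (b) there is an execution of Algorithm~\ref{alg:back_and_forth} whose output decision list has length at most $c^\ast$; and then, using that every list produced by the algorithm implements $F$ (Lemma~\ref{lemma:back_and_forth}), conclude that $k=c^\ast$ and that this execution produces a list of length exactly $k$.

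For (a), start from an arbitrary optimal decision list $\langle (f_1,\hy_1),\ldots,(f_k,\hy_k)\rangle$ implementing $F$. For each branch $i$ let $X_i$ be the set of inputs routed to branch $i$, i.e. those $\hx$ with $f_i(\hx)=1$ and $f_j(\hx)=0$ for all $j<i$; since $F$ is realizable, an implementation is total, so the $X_i$ cover all of $X$. For $\hx\in X_i$ the pair $(\hx,\hy_i)$ satisfies $F$, hence $\hy_i$ satisfies $MustSat(\hx)$; therefore $\bigcup_{\hx\in X_i} MustSat(\hx)$ is satisfied by $\hy_i$, is a satisfiable subset of $S_\vy$, and lies inside some MSS $N^i$. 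Then $\{N^i : X_i\neq\emptyset\}$ is an MSS-cover: any MFS $M_\vx$ is all-falsifiable, so $M_\vx=Fals(\hx)$ for some $\hx$ ($\subseteq$ by all-falsifiability, $\supseteq$ by maximality of $M_\vx$); this $\hx$ lies in some $X_i$, and $\Co(M_\vx)=MustSat(\hx)\subseteq N^i$, so $N^i$ covers $M_\vx$. This cover has size at most $k$, giving $c^\ast\le k$.

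For (b), fix a minimum MSS-cover $\mathcal{N}=\{N^1,\ldots,N^{c^\ast}\}$ and run Algorithm~\ref{alg:back_and_forth}, resolving its nondeterministic choices so as to maintain the invariant $L\subseteq\mathcal{N}$. Whenever the current list $L$ leaves some MFS uncovered, pick any uncovered MFS $M_\vx$ in line~\ref{line:mfs}; since $\mathcal{N}$ is an MSS-cover, some $N^j\in\mathcal{N}$ covers $M_\vx$, and $N^j\notin L$ (otherwise $M_\vx$ would be covered by $L$). Let line~\ref{line:mss} return this $N^j$ --- legal, as line~\ref{line:mss} may return any MSS covering $M_\vx$ --- which preserves the invariant and strictly enlarges $L$. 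Since $|L|$ never exceeds $|\mathcal{N}|=c^\ast$, the loop halts after at most $c^\ast$ iterations (and never reports failure, as $F$ is realizable), and the resulting decision list, of length $|L|\le c^\ast$, implements $F$ by Lemma~\ref{lemma:back_and_forth}; hence $k\le c^\ast$. Combining with (a), $k=c^\ast$, so this execution's list has length both $\le c^\ast=k$ and $\ge k$, i.e. exactly $k$.

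The hard part will be step (a): the guards $f_i$ of an \emph{arbitrary} optimal decision list are unconstrained Boolean formulas over $\vx$, not conjunctions of input clauses as in our constructions, so one cannot read an MSS-cover off the list syntactically. What makes the argument work is that all inputs sent to a single branch $i$ are served by the \emph{same} output $\hy_i$, which therefore simultaneously satisfies every $MustSat(\hx)$ for $\hx\in X_i$; this forces the union of those sets to be satisfiable and hence to sit inside one MSS, and the correspondence $M_\vx=Fals(\hx)$, $\Co(M_\vx)=MustSat(\hx)$ transfers this to MFSs. A minor technical point to check is that a decision list implementing a realizable specification is defined on every input, so that every MFS-realizing input is indeed routed to some branch.
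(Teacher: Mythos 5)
Your proof is correct, and it reaches the conclusion by a genuinely different route from the paper's. The paper argues directly on a fixed minimum-length decision list $L$: it attaches to each output $\hy_i$ an MSS containing the output clauses that $\hy_i$ satisfies, proves these $k$ MSSs are pairwise \emph{distinct} via a merging argument (two branches sharing an MSS could be fused into one, contradicting minimality of $L$), builds the list $L'$ from them, and then shows the algorithm can emit exactly this sequence of MSSs by exhibiting, for each $M^i_\vy$, an MFS covered by it but by no earlier $M^j_\vy$ --- again by contradiction, since otherwise $M^i_\vy$ would be removable from $L'$. You instead isolate the combinatorial invariant $c^\ast$, the minimum number of MSSs needed to cover all MFSs of $S_\vx$, and sandwich $k$ between $c^\ast \le k$ (extracting a cover from the branches of an optimal list via $\bigcup_{\hx \in X_i} MustSat(\hx)$) and $k \le c^\ast$ (steering the algorithm's nondeterministic choices along a fixed minimum cover and invoking Lemma~\ref{lemma:back_and_forth}). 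This buys you two things: you never need distinctness of the extracted MSSs nor the redundancy argument, since minimality of $c^\ast$ absorbs both; and you get the clean characterization $k = c^\ast$ as a by-product, which the paper does not state and which is arguably the more informative fact. Your handling of termination (the invariant $L \subseteq \mathcal{N}$ forces the loop to stop within $c^\ast$ iterations with every MFS covered) is also more explicit than the paper's, which leaves the corresponding point implicit in the claim that $L'$ implements $F$. The one assumption you share with the paper is that line~\ref{line:mss} of Algorithm~\ref{alg:back_and_forth} may return \emph{any} MSS covering the current MFS; this is true of the algorithm as stated (and the theorem quantifies over executions), though the concrete MaxSAT implementation always returns a maximum-cardinality MSS, so neither proof guarantees the optimal execution is reachable by the implemented tool.
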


\begin{proof}
Let $L$ be a decision list of minimum size $k$. Assume without loss of generality that the $f_i(\vx)$ formulas partition the space of assignments of $\vx$. Note that every assignment $\hy$ output by the decision list satisfies a set of clauses, which must be contained in an MSS.

First, we prove that no two $\hy_i$ in $L$ share an MSS. Assume there is $i$ and $j$ such that $\hy_i$ and $\hy_j$ share an MSS $M_\vy$. Then, we can replace both $\hy_i$ and $\hy_j$ by a satisfying assignment $\hy$ of $M_\vy$, and this decision list remains an implementation of $F$. But now, since the $i$-th and $j$-th decisions produce the same output, we can merge them into a single decision with $f(\vx) = f_i(\vx) \lor f_j(\vx)$, and output $\hy$. The resulting decision list would have size $k - 1$ and would be a correct implementation of $F$, contradicting the minimality of $L$. Therefore, no $\hy_i$ and $\hy_j$ can share an MSS.

Now, choose for every $\hy_i$ an MSS $M^i_\vy$ containing the clauses satisfied by $\hy_i$. By the previous paragraph, these MSS must be all distinct. Let $L'$ be the decision list constructed from this list of MSS, as in Section~\ref{sec:back_and_forth}. Since there are $k$ MSS, $L'$ has length $k$. Note that $L'$ is also an implementation of $F$, since every input that satisfies the original $f_i(\vx)$ also satisfies the $f'_i(\vx)$ obtained from $M^i_\vy$, and the output $\hy'_i$ obtained from $M^i_\vy$ satisfies a superset of the clauses satisfied by the original $\hy_i$. Therefore, we only need to prove that this list of MSS can be constructed by the Back-and-Forth algorithm.

To prove that, we need to show that for every MSS $M^i_\vy$ there is an MFS $M^i_\vx$ such that:

\begin{enumerate}
\item $M^i_\vx$ is covered by $M^i_\vy$, and
\item $M^i_\vx$ is not covered by $M^j_\vy$, for all $j < i$.
\end{enumerate}

If these requirements are satisfied, then the list of MFS $M^1_\vx, \ldots, M^k_\vx$ can be produced by the Back-and-Forth algorithm. For the purpose of reaching a contradiction, assume that there is no such list of MFS. Then, there is some MSS $M^i_\vy$ such that, for every MFS covered by $M^i_\vy$, there is some MSS $M^j_\vy$ for $j < i$ that covers that MFS. But this means that every input covered by $M^i_\vy$ is already covered by a previous MSS. Therefore, $M^i_\vy$ can be removed from the list of MSS, which contradicts the minimality of $L'$.

Therefore, the list of MSS, and consequently $L'$, can be generated by the Back-and-Forth algorithm.
\end{proof}

\end{document}